\def\K{\mathcal{K}}
\def\KL{\mathcal{KL}}
\def\R{\mathbb{R}}
\def\N{\mathbb{N}}
\def\Z{\mathbb{Z}}
\def\I{\mathcal{I}}
\def\J{\mathcal{J}_{_{\scriptstyle \neq}}}
\def\bJ{\mathcal{J}_{_{\scriptstyle =}}}
\def\mer{\hfill $\circ$}
\def\qed{$\hfill\blacksquare$}
\def\es{\notag \\}
\def\lp{\left(}
\def\rp{\right)}
\newcommand\jlma[1]{{\textcolor{black}{#1}}} %red
\newcommand\blue[1]{{\textcolor{black}{#1}}} %black
\newcommand\ajv[1]{{\textcolor{black}{#1}}} %blue
\newcommand\hh[1]{{\textcolor{black}{#1}}} %magenta
\newtheorem{theorem}{Theorem}[section]
\newtheorem{definition}[theorem]{Definition}
\newtheorem{remark}[theorem]{Remark}
\newtheorem{lemma}[theorem]{Lemma}
\newtheorem{assumption}[theorem]{Assumption}
\newtheorem{corollary}[theorem]{Corollary}
\theoremstyle{remark}
\newtheorem{claim}{Claim}
\newcommand\munderbar[1]{ \underaccent{\bar}{#1}}
\journal{Journal of the Franklin Institute}
\begin{document}
\begin{frontmatter}

\title{Strong exponential stability of switched impulsive systems \\with mode-constrained switching}
\author[1]{Alexis J. Vallarella}
\author[2]{José Luis Mancilla-Aguilar}
\author[1]{Hernan Haimovich}

\address[1]{Centro Internacional Franco-Argentino de Ciencias de la Informaci\'on y de Sistemas (CIFASIS),
CONICET-UNR, Ocampo y Esmeralda,
%(S2000EZP) %2000
Rosario, Argentina. 
{\texttt{\{vallarella,haimovich\}@cifasis-conicet.gov.ar}}}
\address[2]{Departamento de Matemática, Facultad de Ingeniería, UBA, Avda. Paseo Colón 850, Buenos Aires, Argentina. 
{\texttt{jmancil}@fi.uba.ar}}

\begin{abstract}
Strong stability, defined by bounds that decay not only over time but also with the number of impulses, has been established as a requirement to ensure robustness properties for impulsive systems with respect to inputs or disturbances.
Most existing results, however, only consider weak stability, \ajv{where the bounds only decay with time}. 
In this paper, we provide a method for
calculating the maximum overshoot 
and the decay rate 
for strong 
global uniform exponential stability bounds
for nonlinear switched impulsive systems.
We consider the scenario of mode-constrained switching where not all transitions between subsystems are allowed, and where subsystems may exhibit unstable dynamics in the flow \ajv{and/or} jump maps.
Based on direct and reverse mode-dependent average dwell-time and activation-time constraints,
we derive stability bounds that 
can be improved by considering longer switching sequences for computation. 
We provide an example that shows how the results can be employed to ensure the stability robustness of nonlinear systems that admit a global state weak linearization.
\end{abstract}

\begin{keyword}
Switched impulsive systems \sep
strong exponential stability \sep
mode-constrained switching 
\end{keyword}
\end{frontmatter}

\section{Introduction}
\label{sec:introduction}
Impulsive systems are dynamical systems
where the state evolves continuously almost everywhere, according to a flow map,
except for certain isolated time instants
when the state exhibits discontinuous jumps governed by a static jump equation \citep{lakshmikantham1989theory,feketa2021survey}. 
Switched systems arise when there exist a
family of subsystems or modes, along with a rule that determines which one of the subsystems is active at every instant \citep{liberzon_book03}.
Systems that exhibit both impulsive and switching dynamics arise in
robotics, automotive industry,
aircraft and network control among many others \citep{li2005switched}, spiking interest in the study of their stability.
Several works address stability and robustness of different classes of nonlinear
systems with switched and impulsive dynamics
\citep{guan2005hybrid,gao2015further,liu2019exponential}, 
in particular from the Input-to-State Stability (ISS) perspective
 \citep{cai2009,Liu2011899,gao2016input,mancilla2020uniform,9599445}.
For systems with impulsive behavior, 
two different notions of asymptotic stability exist: \textit{weak stability}, for which the decay is ensured as the time advances, and which is the most popular notion of stability, 
and  \textit{strong stability}, which additionally takes into account the number of impulses
that occur within the elapsed time and is standard in hybrid systems theory \citep{goesan_csm09}. 
It was recently shown that the strong stability property ensures robustness with respect to inputs or disturbances, while its weak counterpart does not \citep{haimovich2020nonrobustness,manhai_lcss20}
\ajv{therefore, this stronger concept of stability deserves investigation.}

Stability in switched systems is usually classified into the categories of arbitrary and constrained switching. 
For continuous- and discrete-time systems under arbitrary switching, 
 exponential stability of all its subsystems
 is not sufficient to ensure stability.
Several sufficient (and necessary) conditions
for 
asymptotic or exponential stability under arbitrary switching
exist, see e.g. \citep{MANCILLAAGUILAR200067,lin2009stability,MOLCHANOV198959,dayawansa}.
In cases where stability cannot be ensured under arbitrary switching, 
imposing constraints on the switching signals may help.
One approach is to limit the number of switchings through 
appropriate bounds, such as those known as
dwell- and average dwell-time constraints, with  
later extensions to the more general Mode-Dependent Average Dwell Time
(MDADT) in \cite{6097035}.
Similar approaches have been applied for achieving exponential stability
in various contexts, including
 switched systems with unstable subsystems \citep{xie2013exponential,zhao2014switching,WEI20193102}, 
linear time-varying systems \citep{ju2022exponential},
 systems with time delays 
\citep{he2020exponential,zhan2021exponential}
and systems with destabilizing impulses 
\citep{zhang2019stability}.
Other works consider
the synthesis of stabilizing switching signals \citep{kundu2016generalized,yang2018stabilization}
or analyze % 
stability in the context of hybrid systems involving switching and impulsive behaviour by considering dwell-time conditions 
\citep{wang2013finite,liu2012class,briat2017dwell, li2018input}.

Most of the existing results \jlma{only} provide stability bounds for the  \textit{weak} stability properties, do not combine the analysis 
of switching and impulsive behaviors for nonlinear systems or are difficult to compute numerically. 
\blue{Previous works on switched and time-varying impulsive systems show how to assert ISS properties of certain families of impulsive systems. In \cite{mancilla2020uniform}, it is shown that this can be achieved by reducing the problem to assessing the global asymptotic stability of a family of 
scalar systems. Additionally, \cite{mancilla2021integral} proves that strong ISS properties are ensured for systems that can be written 
as a
linear term plus a (nonlinear) perturbation 
satisfying a bound of affine
form on the state,
by ensuring that the linear term admits a strong global uniform exponential stability bound on the norm of the state. Both results ultimately require the knowledge of the maximum overshoot and the decay rate that characterize strong exponential stability. 
\jlma{Although \cite{mancilla2020uniform} provides some results for obtaining bounds for switched impulsive systems, they are limited to systems where the impulse times coincide with the switching times. Therefore, these results do not allow the frequencies of the impulse times and the switching times to be considered separately, excluding the possibility of taking into account the frequency of the impulse times within each mode in the stability analysis.}
}

In this paper, we address these issues by  
presenting a rather general method for computing the maximum overshoot and the decay rate that characterize strong exponential stability for nonlinear systems exhibiting both switching and impulsive dynamics, \jlma{that takes into account both the frequency of impulse times within each mode and the frequency of switching times.}
The system model employed is given by a set of switching subsystems, each of which is itself an impulsive system, and by a switching signal that selects which impulsive system is active at every time instant. The switching signal has the feature of introducing a possible additional %jump 
impulse
at every switching instant and may be constrained so that not every transition between subsystems is admissible.
By considering impulsive subsystems, jumps that may occur at switching instants, and that not every transition between subsystems is possible, we can encompass a wide range of systems. 
The stability results provided are based on switched bounding functions and require
upper or lower bounds
on i) the number of switches of the switching signal with respect to the total elapsed time and on ii) the number of impulse jumps that occur within each mode in relation to the activation time for that mode. % 
Additionally, in the presence of unstable subsystems, conditions on
iii) the activation time of each mode are required to prove stability. 
Since the switching signal may be constrained so that not every transition between subsystems is possible, we show that considering longer switching sequences in the computation may yield tighter stability bounds. % 
%Relevant results regarding the stability of linear switching systems under constrained switching can be found in \cite{philippe2016stability}, but only for discrete-time systems.
We illustrate the applicability of our results through numerical examples.
In particular, 
we provide an example where the computation of the maximum overshoot and decay rate is employed to
ensure input-to-state stability of a switched time-varying impulsive nonlinear system that admits a global state weak linearization \citep{mancilla2021integral}.

The paper is organized as follows.
In Section~\ref{sec:preliminaries} we state the basic assumptions and results and we
introduce some of the notation and definitions used throughout the paper. In
Section~\ref{sec:mainresults} we present our main results. 
In Section~\ref{sec:example} we provide some numerical examples of switched impulsive systems with unstable subsystems for which we prove %weak and
strong 
global uniform exponential stability. Conclusions are given in Section~\ref{sec:conclusions}.
\ajv{In the appendix, we provide proofs for some necessary lemmas.}

% PRELIMINARIES 
\section{Preliminaries}
\label{sec:preliminaries}

This section introduces the required notation, defines
the considered type of switched impulsive systems, provides the stability definitions employed and ends with some necessary results on bounds for trajectories.

\subsection{Notation}
$\R$, $\R_{\geq 0}$, $\R_{>0}$, $\N$ 
and $\N_0$ denote the sets of real, nonnegative real, positive real, natural and nonnegative integer numbers, respectively. 
Let $\mathrm{id}$ denote the identity function, so that $\mathrm{id}(x)=x$ for all $x$. 
$\lfloor \cdot \rfloor: \R \rightarrow \Z$ denotes the floor function,
$\lfloor s \rfloor :=  \max \{m \in \Z:m \leq s \}$. For $n,L \in \N$ we define
$mod(n,L) := n - L\lfloor n/L \rfloor$. %
For any sequence $\{r_k\}$ of real numbers, % 
we take the conventions: $\prod_{k=\ell}^{\bar \ell} r_k = 1$ and $\sum_{k=\ell}^{\bar \ell} r_k = 0$ if $\ell>\bar \ell$.
For every differentiable function $V:\R^n \rightarrow \R$, its gradient at $x\in \R^n$ is denoted by $\nabla V(x) := \frac{\partial V}{\partial x}(x)$. For any function $h$ of a real argument, $h(t^-) := \lim_{s \to t^-} h(s)$ is the left-sided limit of $h$ at $t$.

\subsection{Switched impulsive systems}
\jlma{Let  
 $\I := \{1, \hdots,  N\}$, $\bJ = \{(i,i) : i \in\I \}$ and $\J \subset \{(i,j): i,j \in \I \land i \neq j \}$.}
 \jlma{ Given a family of {\em flow maps} $f_i:\R^n \rightarrow \R^n$, $i\in \I$, that are locally Lipschitz and satisfy $f_i(0)=0$, a family of {\em jumps maps} $g_{i,j}:\R^n \rightarrow \R^n$, $(i,j) \in \J  \cup \bJ$, that satisfy $g_{i,j}(0) = 0$, a finite or infinite sequence $\sigma=\{\tau_1,\tau_2,\ldots\}\subset \R_{> 0}$ of \textit{event times}, which we suppose along the paper strictly increasing and such that $\tau_k\to \infty$ when it is infinite, and a {\em switching signal} $\nu: \R_{\geq 0} \rightarrow \I$ that is right-continuous, piecewise constant and such that the set of switching times
\begin{equation}
s(\nu):=\{t > 0 : \nu(t^{-})\neq \nu(t)\}, \label{eq:switching_times}
\end{equation}
is contained in the set of event times $\sigma$, we consider the switched impulsive systems defined by
\begin{subequations}
\label{impulsive_new}
\begin{eqnarray}
\dot x(t)=f_{\nu(t)} (x(t)) \phantom{lllll|ll} & &t \notin  \sigma,  \label{3a-} \\
x(t)=g_{\nu(t^-),\nu(t)} (x(t^-))& &t \in  \sigma, \label{3b-}
\end{eqnarray}
\end{subequations}
where $t\geq 0$ and $x(t)\in \R^n$ is the state.}
\jlma{The value $\nu(t)\in\I$ is called the switching mode of the system at time $t$. The sequence $\sigma$ contains the switching times, where the switching signal is discontinuous, as well as possibly other instants belonging to the set
\begin{equation}
  \mu := \sigma \setminus s(\nu),
  \label{eq:single-mode-impulses}
\end{equation}
where impulses can also occur, according to (\ref{3b-}) but with $\nu(t^-)=\nu(t)$. 
We will refer to the times in 
$\mu$ as the {\em nonswitching impulse times} to distinguish them from the switching times. Note that the sequence of event times thus satisfies
\begin{equation}
  \sigma= s(\nu)  \cup \mu.
 \label{eq:event_times} 
 \end{equation}
Since  $\J$ is not necessarily equal to $\I \times \I$, the set $\J$ constrains the admissible switching signals $\nu$ as follows:
\begin{equation}
\label{eq:Jconstraint}
   (\nu(t^-),\nu(t))\in \J, \quad \text{for all }t\in s(\nu).
 \end{equation}  
 Note that $\mu$ contains all the impulse times that occur between switching times. If $t\in \mu$ and mode $i$ is active, that is $\nu(t^{-})=\nu(t)=i$, then the jump map which governs the state jump is $g_{\nu(t^-),\nu(t))}=g_{i,i}$.}
 Figure~\ref{fig:WDG_example2} illustrates these sets for a system with $N=3$. 

The type of switched impulsive system $\Sigma$ considered is thus uniquely determined by the flow maps $f_i$ for $i\in\I$, the jump maps $g_{i,j}$ for $(i,j) \in \J\cup\bJ$, the switching signal $\nu$ and the set of nonswitching event times $\mu$, so that we may write
\begin{equation}
  \Sigma=\Big(\{f_i\}_{i\in\I},\{g_{i,j}\}_{(i,j)\in\J\cup\bJ},\nu,\mu\Big). \label{system1}
\end{equation}
\begin{figure}[ht!]
\begin{center}
\begin{tikzpicture}
  % Node definition
  \node[draw, circle] (1) at (0,0) {$f_1,g_{1,1}$};
  \node[draw, circle] (2) at (2,0) {$f_2,g_{2,2}$};
  \node[draw, circle] (3) at (1,-2) {$f_3,g_{3,3}$};
  % Arc Definition
  \draw[->] (1) edge[bend right=-45] node[below] {$g_{1,2}$} (2);
  \draw[->] (2) edge[bend right=-45] node[above] {$g_{2,1}$} (1);
  \draw[->] (2) edge node[right] {$g_{2,3}$} (3);
  \draw[->] (3) edge node[left] {$g_{3,1}$} (1);
\end{tikzpicture}
\end{center}
\vspace{-0.4cm}
\caption{Example of the admissible jumps for a switched impulsive system with $N=3$ subsystems, $\J=\{(1,2),(2,1),(2,3),(3,1)\}$ and $\bJ = \{(1,1),(2,2),(3,3)\}$. Switching from mode $1$ to $3$ and from $3$ to $2$ is not possible.}
\label{fig:WDG_example2}
\end{figure}
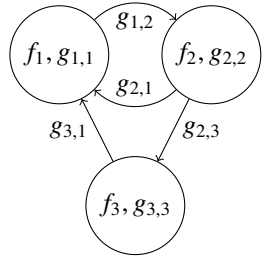
\jlma{In the sequel, let $\Gamma$ denote the set of all possible sequences $\sigma$ of event times.} 

A solution
to \eqref{impulsive_new} is a right-continuous function $x : [t_0 , T_x ) \rightarrow \R^n$ that is locally absolutely continuous on each interval $[a, b)$ such that $(a, b) \cap
\sigma =\emptyset$, its discontinuities are of first kind and satisfies \eqref{3a-} for almost
all $t \in [t_0, T_x) \backslash \sigma$ and \eqref{3b-} for all $t \in (t_0, T_x ) \cap \sigma$.
Note that a solution always starts from $x_0:=x(t_0)$ with a flow, even if $t_0 \in \sigma$.

\begin{remark}
  The formulation~(\ref{impulsive_new}) can model switched and impulsive systems of several different types. 
  For instance, continuous-time switched systems without impulsive behaviour are covered by defining $g_{i,j}=\mathrm{id}$ for all \ajv{$(i,j) \in \J \cup \bJ$} and $\sigma = s(\nu)$. 
  \blue{
Standard discrete-time switched systems defined by $x_{k+1}=h_{\zeta(k)}(x_k)$ 
for all $k\in \N_0$ with $\zeta: \N_0 \rightarrow \I$  
and $h_i:\R^n \rightarrow \R^n$ for all $i\in \mathcal{I}$ can be modelled in the present setting. To see this, define
 $\nu:\R_{\geq 0} \rightarrow \mathcal{I}$ as $\nu(t)=\zeta(k)$ for all $t\in [k,k+1)$.
The switching times are then $s(\nu)=\{t\in \N: \nu(t^-)\neq \nu(t)\}$ and the nonswitching impulse times are
$\mu:= \N \backslash s(\nu)$, so that the sequence of event times is $\sigma=\N$. Define the flow and jump maps as $f_i = 0$ for all $i\in \mathcal{I}$ 
and $g_{i,j}:=h_i$ for all $(i,j)\in \mathcal{J}_= \cup \mathcal{J}_{\neq}$, so that the switched impulsive system in the present scheme becomes 
\begin{align*}
\dot x(t)&=0 \phantom{_{\nu(t^-), \nu(t)}(x(t^-))= h_{\nu(t^-)}(x(t^-))= h_{\zeta(t-1)}(x(t-1))}\quad t\notin \N \\
x(t)&=g_{\nu(t^-), \nu(t)}(x(t^-))= h_{\nu(t^-)}(x(t^-))= h_{\zeta(t-1)}(x(t-1)) \quad t\in \N
\end{align*}
Defining $x_k:=x(t-1)$, where for $t\in \N$ we have $k\in \N_0$, the discrete-time switched system is obtained.
    }
  Additionally, any combination of switched, impulsive, continuous- or discrete-time behaviours can be successfully covered.
\end{remark}

%% STABILITY DEFINITIONS ======================================
\subsection{Stability definitions}
Given a switching signal $\nu$ and a sequence of nonswitching impulse \blue{times} $\mu$, \blue{the quantities} $n_\nu(t,t_0)$, $n_\mu(t,t_0)$ and $n(t,t_0)$ represent the number of switching, nonswitching and total impulses, respectively,
that lie in the interval $(t_0,t]$, %namely 
\begin{align}
  n_\nu(t,t_0)&:= \# \lp  s(\nu) \cap (t_0,t]  \rp, \\
  n_\mu(t,t_0)&:= \# \lp \mu \cap (t_0,t] \rp,\\
  n(t,t_0) &:=n_\mu(t,t_0)+n_\nu(t,t_0).
\end{align}
\jlma{ \begin{definition}
\label{SWGUAS}
A switched impulsive system $\Sigma$ as in \eqref{system1}
is said to be
Strongly Globally Uniformly Exponentially Stable (S-GUES) if there exist positive constants $K$ and $\lambda$ such that for all 
$x_0 \in \R^n$ and $ t_0 \geq 0$ the
solution with $x(t_0) = x_0$ satisfy 
\begin{equation}
    |x(t)|\leq K|x_0| e^{-\lambda(t-t_0+n(t,t_0))} \quad \forall t\geq t_0. \label{cota2}
\end{equation}
\end{definition}
}
It is often of interest to know whether the same stability bound holds not just for a single system but also for many systems grouped into a family of systems. This family could represent, for instance, all systems sharing the same flow and jump maps but with different switching signals or event-time sequences. With this aim in mind, let $\Lambda$ be a set of %
pairs $(\nu,\mu)$, where $\nu$ is a switching signal and $\mu$ is a sequence of nonswitching impulse times. 
For future reference, we denote the family of switched impulsive systems \eqref{system1} 
for which $(\nu,\mu) \in \Lambda$ as 
\begin{equation}
\Sigma_\Lambda:=\{\Sigma: (\nu,\mu) \in \Lambda\}. \label{system2}
\end{equation}
\jlma{
\begin{definition}
\label{family_GUAS}
The family of switched impulsive systems $\Sigma_\Lambda$ is said to be S-GUES
if there exist positive constants $K$ and $\lambda$ such that for all $(\nu,\mu) \in \Lambda$,
 $x_0 \in \R^n$ and $ t_0 \geq 0$ the
solutions of $\Sigma_\Lambda$ with $x(t_0) = x_0$ satisfy  \eqref{cota2}. 
\end{definition}
}
The above stability properties are all uniform over the initial time $t_0\geq 0$; in addition, for a family of systems $\Sigma_\Lambda$, they are also uniform over all $(\nu,\mu)\in \Lambda$.

%% BOUNDS ON STATE TRAJECTORIES ============================
\subsection{Bounds on state trajectories}
Along the paper we consider that the following holds.
\jlma{
\begin{assumption}
\label{ass031}
For all $i\in \I$, there exist
continuously differentiable functions $V_{i}: \R^n \rightarrow \R_{\ge 0}$,
 quantities $\munderbar K_i>0$, $\bar K_i>0$, $\bar \lambda(i) \in \R$, 
and quantities $\bar r(i,j)\in \R_{>0}$
for all $(i,j)\in \J \cup \bJ$ 
such that for some $m\in \N$ and all $\xi \in \R^n$ we have
\begin{subequations}
\begin{align}
&\munderbar{K}_{i}|\xi|^m \leq V_{i} (\xi) \leq \bar K_{i}|\xi|^m, \label{ass03-1}\\
&\nabla V_i(\xi) f_i(\xi) \leq \bar \lambda(i) V_{i}(\xi), \label{ass03-2}\\
&V_{j}(g_{i,j}( \xi)) \leq \bar r(i,j) V_{i}(\xi).\label{ass03-3}
\end{align}
\end{subequations}
\end{assumption}
}

Assumption~\ref{ass031} allows to bound the state trajectory of the system of the form  \eqref{impulsive_new} by using different functions $V_i$ for each subsystem $i\in \I$,
without assuming 
\ajv{that the each flow and jump maps
%of each subsystem $(f_i,g_{i,i})$ 
can be bounded
by  $\bar \lambda(i)$ and $\bar r(i,j)$ in \eqref{ass03-2} and \eqref{ass03-3}, respectively, 
such that convergence to the equilibrium 
is guaranteed for each mode.}
As a result, $\bar \lambda(i)$ may be positive and $\bar r(i,j)$ may be greater than $1$ for some $(i,j) \in \J \cup \bJ$. 
In \eqref{ass03-3}, one can distinguish
the values of the form $\bar r(i,i)$, which bound the jumps at impulse times in $\mu$,
and of the form $\bar r(i,j)$ with $(i,j) \in \J$, used for switching times in $s(\nu)$. \ajv{For linear switched systems, functions $V_i$ for each mode that satisfy Assumption~\ref{ass031} can be easily derived; see Appendix~\ref{app:boundingLSI}.}

In what follows, 
we derive sufficient conditions 
that ensure 
strong GUES of a family $\Sigma_{\Lambda}$
under  
Assumption~\ref{ass031}.
This rather general result in Lemma~\ref{cor:1} will form the basis for the results presented.
For this purpose, given a switching signal $\nu$
and the quantities $\bar\lambda(i)$ in \eqref{ass03-2}, define the piecewise constant right\-/continuous function
$\lambda :  \R_{\geq 0} \rightarrow \R$ as
\begin{equation}
 \lambda(t):=\bar \lambda({\nu(t)}).
 \label{func_lambda3} 
\end{equation}
In additional correspondence with \jlma{$\sigma=\{\tau_k\}$} 
and $\bar r(i,j)$ in \eqref{ass03-3}, define the distribution $r$ %
as   
\begin{equation}
r(\cdot):= \sum_{k=1}^\infty \ln \lp \bar r(\nu({\tau_k}^-),\nu({\tau_k}) \rp \delta(\cdot-\tau_k),   \label{func_mu3}
\end{equation}
where $\delta(\cdot-\tau_k)$ is the Dirac delta distribution $\delta$ translated to $\tau_k$. Therefore $\int_{t_0^+}^t r(\tau)d\tau=\sum_{\tau_k\in (t_0,t]} \ln \lp \bar r(\nu({\tau_k}^-),\nu({\tau_k}) \rp$.
Appendix~\ref{appendix:0} contains the proof of Lemma~\ref{cor:1}.

\begin{lemma}
\label{cor:1}
Consider a switched impulsive system $\Sigma$ of the form \eqref{impulsive_new} and suppose that
Assumption~\ref{ass031} holds.
Suppose that there exist $C, \lambda_0>0$ such that
\begin{equation}
\int_{t_0^+}^t\lambda(\tau)+r(\tau) d\tau \leq C- \lambda_0(t-t_0+n(t,t_0)), \quad \forall t\geq t_0 \ge 0\;\forall (\nu,\mu)\in \Lambda.  \label{condi2c}
\end{equation}
Then $\Sigma_{\Lambda}$ is S-GUES with 
\begin{equation}
\label{eq:K_lamb2}
\ajv{
K:= e^{\frac{C}{m}} \sqrt[m]{  \frac{\max_{i\in \I} \bar K_i}{\min_{i\in \I} \munderbar K_i}} \quad \text{and} \quad \lambda:=\frac{\lambda_0}{m}.}
\end{equation}
\end{lemma}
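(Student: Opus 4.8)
The plan is to track the composite Lyapunov-like quantity $W(t) := V_{\nu(t)}(x(t))$ along any solution, establish a scalar decay bound for $W(t)$ in terms of $W(t_0)$ that is governed by the integral condition \eqref{condi2c}, and then recover the state bound \eqref{cota2} by sandwiching $|x(t)|$ via the two-sided estimate \eqref{ass03-1}. Since \eqref{condi2c} is assumed to hold uniformly over all $t\ge t_0\ge 0$ and all $(\nu,\mu)\in\Lambda$, the resulting pair $(K,\lambda)$ will be valid for the whole family $\Sigma_\Lambda$, yielding S-GUES in the sense of Definition~\ref{family_GUAS}.

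First I would analyze $W$ on each maximal flow interval. On any $(a,b)$ with $(a,b)\cap\sigma=\emptyset$ the mode $\nu\equiv i$ is constant, and since $x$ is locally absolutely continuous there while $V_i$ is continuously differentiable, $t\mapsto V_i(x(t))$ is locally absolutely continuous and \eqref{ass03-2} gives $\tfrac{d}{dt}V_i(x(t))=\nabla V_i(x(t))f_i(x(t))\le\bar\lambda(i)V_i(x(t))=\lambda(t)W(t)$ for almost all $t$. A Gr\"onwall/comparison argument then yields $W(b^-)\le W(a)\exp\!\left(\int_a^b\lambda(\tau)\,d\tau\right)$. At each event time $\tau_k\in\sigma$, the jump \eqref{3b-} together with \eqref{ass03-3} gives $W(\tau_k)=V_{\nu(\tau_k)}\big(g_{\nu(\tau_k^-),\nu(\tau_k)}(x(\tau_k^-))\big)\le \bar r(\nu(\tau_k^-),\nu(\tau_k))\,W(\tau_k^-)$.

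Next I would chain these estimates over the finitely many event times $\tau_k\in(t_0,t]$, recalling that a solution flows out of $t_0$ even when $t_0\in\sigma$, so that no jump factor is applied at $t_0$ itself. This produces
\[
W(t)\le W(t_0)\,\exp\!\left(\int_{t_0}^t\lambda(\tau)\,d\tau\right)\prod_{\tau_k\in(t_0,t]}\bar r(\nu(\tau_k^-),\nu(\tau_k)).
\]
By the definition \eqref{func_mu3} of the distribution $r$, the product equals $\exp\!\left(\int_{t_0^+}^t r(\tau)\,d\tau\right)$, so that $W(t)\le W(t_0)\exp\!\left(\int_{t_0^+}^t\lambda(\tau)+r(\tau)\,d\tau\right)$. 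Applying \eqref{condi2c} gives $W(t)\le W(t_0)\,e^{C-\lambda_0(t-t_0+n(t,t_0))}$. Finally, using \eqref{ass03-1} to bound $W(t)\ge(\min_{i\in\I}\munderbar{K}_i)|x(t)|^m$ from below and $W(t_0)\le(\max_{i\in\I}\bar K_i)|x_0|^m$ from above, dividing, and taking $m$-th roots yields exactly \eqref{cota2} with the stated $K$ and $\lambda=\lambda_0/m$.

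The two per-piece estimates are routine; the main care lies in the bookkeeping of the event times and in matching the Dirac-comb integral of $r$ to the finite product of jump factors, together with the boundary convention that the trajectory flows out of $t_0$ so the lower limit is genuinely $t_0^+$. A point worth stressing is that nothing in the argument requires $\bar\lambda(i)\le 0$ or $\bar r(i,j)\le 1$: individual flows and jumps may be expansive, and stability is recovered solely through the aggregate sign condition \eqref{condi2c}, which is precisely what allows unstable modes to be accommodated.
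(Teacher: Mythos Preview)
Your proof is correct and follows essentially the same approach as the paper's own proof: define the composite function $W(t)=V_{\nu(t)}(x(t))$, obtain the flow estimate $\dot W\le\lambda(t)W$ and the jump estimate $W(\tau_k)\le\bar r(\nu(\tau_k^-),\nu(\tau_k))W(\tau_k^-)$ from Assumption~\ref{ass031}, chain them to get $W(t)\le W(t_0)\exp\!\big(\int_{t_0^+}^t\lambda(\tau)+r(\tau)\,d\tau\big)$, apply \eqref{condi2c}, and finish with the sandwich \eqref{ass03-1} and an $m$-th root. The paper states the intermediate bound as a separate claim but the argument is otherwise identical, including the handling of the lower limit $t_0^+$.
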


% MAIN RESULTS ==============================
\section{Main results}
\label{sec:mainresults}

In this section we establish sufficient conditions for 
S-GUES for families of switched impulsive systems of the form \eqref{impulsive_new} 
under Assumption \ref{ass031}. 
The conditions are based on mode-dependent average dwell time bounds
for the number of switchings 
and the number of impulsive jumps within each subsystem. 
\ajv{
We first provide a method to improve the system's state bound by appropriately handling the effect of the switchings.}
Subsequently we provide activation time conditions 
that ensure that \jlma{`stabilizing'} modes are active for a sufficiently long time relative to the \jlma{`destabilizing'} modes
and we identify further conditions to guarantee S-GUES, providing
an explicit calculation of
the quantities that define the property. 

\subsection{Average dwell time conditions}

Given a switching signal $\nu$, 
we define the activation time for each subsystem $i\in \I$ in the interval $[t_0,t]$ as 
\begin{equation}
t_a(i,t,t_0):= \int_{t_0}^t \nu_{on} (i,\tau) d\tau, \quad \forall t \geq t_0  \geq 0
\end{equation}
where the activation function %
$\nu_{on}: \I \times \R_{\geq 0} \rightarrow \{0,1\}$
indicates which subsystem is active at a given time
\begin{equation}
\nu_{on}(i,t):= 
\begin{cases}
1 \phantom{-} \text{if}\phantom{-} \nu(t)=i \\
0 \phantom{-} \text{if}\phantom{-} \nu(t)\neq i \\
\end{cases} \forall i\in\I,  t\geq 0.
\end{equation}
Note that the sum of all activation times equals the total elapsed time
\begin{equation}
\sum_{i=1}^{ N} t_a(i,t,t_0) %= \int_{t_0}^t d\tau
=t-t_0,  \quad \forall t\geq t_0.  \label{time_1}   
\end{equation}
For a sequence $\mu$ of nonswitching impulse times, let $n(i,t,t_0)$ denote the number of impulses within mode $i\in\I$ that occur in the interval $(t_0,t]$
\begin{align}
  n(i,t,t_0) &:= \# \lp \mu_{i}  \cap (t_0,t]   \rp,\\
 \mu_i &:= \lp \mu \cap \{t: \nu(t)=i\} \rp %
 \label{def:impulse_times}, \quad \forall i \in \I.
\end{align}
Thus, $\sum_{i=1}^{ N} n(i,t,t_0)=n_\mu(t,t_0)$  for all $t\geq t_0$.

\jlma{We assume that each subsystem has an upper or lower bound for the number of impulse jumps that may occur without switching, depending on whether they are 
regarded
as stabilizing or destabilizing, i.e., depending on whether each $\bar r(i,i)$ for the bound in \eqref{ass03-3} is greater or less than $1$.
This consideration is formalized in the following assumption, where by convention $s/\infty=0$ for all $s\in \R$.
\begin{assumption}[Mode-dependent ADT conditions regarding impulses within each subsystem]
\label{assumption32} $\Sigma_{\Lambda}$ satisfies Assumption \ref{ass031}, and for each $i\in \I$, there exist quantities $T^i_J \in \R_{> 0} \cup \{\infty\}$
and $N_0^i \in \R$ such that
\ajv{
\begin{equation*}
N_0^i \geq 1 \text{~~if~~} \bar r(i,i) \geq 1 \quad \text{and}  \quad N_0^i \leq -1  \text{~~if~~} \bar r(i,i) <1,  %\\
\end{equation*}
}
such that for each $(\nu,\mu)\in\Lambda$, %each $n(i,t,t_0)$ 
for all $i\in \I$, $t_0\geq 0$ and $t\geq t_0$, $n(i,t,t_0)$ satisfies
\begin{subequations}
\label{ass:r<>1}
    \begin{equation}
    n(i,t,t_0)\leq  N_0^i+ \frac{t_a(i,t,t_0)}{T_J^i}, \quad \forall i: \bar r(i,i) \ge 1,\label{n_imp>1} 
    \end{equation}
    \begin{equation}
    n(i,t,t_0)\geq  N_0^i+ \frac{t_a(i,t,t_0)}{ T_J^i}, \quad \forall i: \bar r(i,i)< 1.\label{n_imp<1}
    \end{equation}
\end{subequations}
\end{assumption}
When $\bar r(i,i)>1$, the impulses occurring within an interval where mode $i$ is active have a destabilizing effect, so we need to bound its number from above. On the contrary, when $\bar r(i,i)<1$ the impulses have a stabilizing effect and then we bound its number from below. When $\bar r(i,i)=1$, the effect of the impulses is neutral, but since we are considering strong stability, where the decaying bound also depends on the number of jumps, we need to bound this number from above.
}
\subsection{Improving bounds by increasing the mode sequence length} 
\jlma{When for some $i\neq j$, $\bar r(i,j)>1$, at first glance the switchings seem to have a destabilizing effect, but this is not necessarily true.
}
\blue{In the following, we present an example in which increasing the number of switchings improves convergence.}
Consider a
family $\Sigma_{\Lambda}$ satisfying Assumption \ref{ass031}.
If we bound each $\bar r(\nu(\tau_k^-),\nu(\tau_k))$ in  \eqref{func_mu3}
according to 
the maximum of $\bar r(i,j)$ among all possible jumps between subsystems, i.e. 
\begin{equation}
   R(1):=\max \{\bar r(i,j): (i,j) \in \J \}, \label{R_b}    
\end{equation}
we may obtain an overly conservative state bound.
For illustration, consider a switched impulsive system with $N=3$ subsystems and the
jump graph in Figure~\ref{fig:WDG_example2}.
Consider the following matrix arrangement of weights $\bar r(i,j)$  in \eqref{ass03-3}
with $(i,j)\in \J \cup \bJ$
\begin{equation}
\bar R=
\begin{bmatrix}
    0.216   & 0.003 &  - \\ %0.165 \\
    0.274   & 0.004 &  0.195 \\
    1.656   & - &  0.525 %0.031 &  1.525 
\end{bmatrix}.   
\end{equation}
The values on the diagonal represent the values of $\bar{r}(i,i)$ for impulse jumps that occur within each mode, while the values outside the diagonal represent the values of $\bar{r}(i,j)$ for jumps between possible mode switchings.
Bounding $\bar r(i,j)$ with $R(1)=1.656$ for every $i\neq j$ in order to bound the trajectories would result in an overly conservative and probably useless bound as far as stability is concerned.
Instead, we propose a better way to compute the bound that includes $R(1)$ as a special case. 
\blue{We define the
maximum combined weight $R(L)$ as the maximum of the product of the weights $\bar r(i,j)$ along all admissible mode sequences of length $L\in \N$ as follows}
\begin{equation}
R(L):=\max  \bigg\{ \prod_{\ell=1}^L \bar r(j_{\ell-1},j_{\ell}): 
 (j_{\ell-1}, j_{\ell})\in \J \medspace \forall \ell \in \{1,2,\hdots, L \} \bigg\}.   \label{Rell} 
\end{equation}
\blue{For the present example, computing the maximum combined weight for a sequence length}
 $L=2$, we obtain $R(2)=\bar r(2,3)  \bar r(3,1) \approx 0.323<1$. 
This implies that while 
%the effect of a single switching 
\ajv{the effect of the jump introduced by a single switching}
may not ensure a decrease in the Lyapunov function, \ajv{that of the jumps introduced by two successively admissible switchings does indeed ensure such a decrease.}
\jlma{These considerations lead to the following assumption about the frequency of switchings}
\jlma{\begin{assumption}[ADT conditions regarding switchings]
\label{ass:jumps}
Every $(\nu,\mu) \in \Lambda$ 
satisfies a) or b).
\begin{enumerate}[a)]
\item \label{ass:jumps:item:a} There exist quantities 
$\bar N_0\geq 1$ and $\bar T_S>0$  
such that $n_\nu(t,t_0)$ for all $t \geq t_0\ge 0$ satisfies  
\begin{equation}
     n_\nu(t,t_0)\leq  \bar N_0+ \frac{t-t_0}{\bar T_S}.  \label{n_nua}
\end{equation}
\item \label{ass:jumps:item:b}  There exist quantities 
$\munderbar{N}_0\leq -1$ and 
$\munderbar{T}_S \in \R_{>0} \cup\{\infty\}$
such that $n_\nu(t,t_0)$ for all $t \geq t_0\ge 0$ satisfies 
\begin{equation}
    n_\nu(t,t_0) \ge \munderbar{N}_0+ \frac{t-t_0}{ \munderbar{T}_S}.   \label{n_nub} 
\end{equation}
\end{enumerate}
\end{assumption}
}
\jlma{Assumption~\ref{ass:jumps} item~\ref{ass:jumps:item:a}) will be considered when switchings are destabilizing or have a neutral effect, i.e. $R(L)\ge 1$ for every $L$, while item~\ref{ass:jumps:item:b}) will be assumed when they have a stabilizing effect, i.e. $R(L)<1$ for some $L$.}

In order to derive conditions for strong GUES, 
we next introduce coefficients $c_s \geq 0$ that weight how the effect of the switching is computed, either according to the 
elapsed time
or the number of switchings.
To simplify the notation in accordance with Assumption~\ref{ass:jumps}, let $L\in \N$ be given, and define the quantities $T_S$ and $N_S$, 
as follows
\begin{align}
&T_S:=
\begin{cases}
\bar T_S  \phantom{||||} \text{if}\phantom{-} R(L) \geq 1 \\
\munderbar {T}_S \phantom{||||}\text{if}\phantom{-} R(L) < 1
\end{cases}  \qquad %\hspace{-0.25cm}
N_S:=
\begin{cases}
c_s \bar  N_0 \phantom{-||||-} \text{if}\phantom{-} R(L) \geq 1 \\
c_s  \munderbar   {N}_0-L\phantom{||}\text{if}\phantom{-} R(L) < 1
\end{cases}. \label{N0TSnus}   
\end{align}
Additionally, we introduce the coefficients $c_i \geq 0$ for each subsystem to weight the effect of their continuous and impulsive parts.
The quantities in \eqref{eq:alf_nu} and \eqref{eq:alf_i} will be used repeatedly throughout the paper to state our results
\begin{align}
\lambda_s(c_s,L)&:= c_s \frac{\ln(R(L))}{T_S L},  \qquad  \qquad \medspace \medspace  r_s(c_s,L):=   (1-c_s) \frac{\ln(R(L))}{L} \label{eq:alf_nu}, \\
\lambda_i(c_i)&:=  \bar \lambda(i) +  c_i \frac{\ln \lp \bar r(i,i) \rp}{ T_J^i},  \quad \quad \medspace\medspace \medspace\medspace  r_i(c_i):=(1-c_i)  \ln \lp \bar r(i,i) \rp.    \label{eq:alf_i}  
\end{align}
Finally, for the sake of convenience, define
\begin{align}
R(0)&:=1  \quad \text{and } \quad
%\hat R&:= \max \left \{R(\ell): \ell \in \{0,1,2,\hdots, L-1\}   \right\} \\
\hat R(L):= \max_{0\leq \ell \leq L-1} R(\ell) \label{hatR}.
\end{align}
We are ready to use the previous definitions to present our first result.
Theorem~\ref{thm32_strong2} presents 
sufficient conditions
that ensure strong GUES for families of systems.
\begin{theorem}
\label{thm32_strong2}
Suppose that the family of switched impulsive systems $\Sigma_\Lambda$ satisfies assumptions~\ref{ass031} and~\ref{assumption32}.
Select $L \in \N$ and compute $R(L)$ %, $R(0)$ 
and $\hat R$ as in \eqref{Rell} and \eqref{hatR}, respectively. Suppose that Assumption ~\ref{ass:jumps}~a) holds if $R(L)\ge 1$ and that Assumption~\ref{ass:jumps}~b) holds when $R(L)<1$.  %\label{R_bad}
Select $c_s \geq 0$ and compute $T_S$ and $N_S$ as in \eqref{N0TSnus}.
%Suppose that $\Sigma_\Lambda$ satisfies Assumption~\ref{assumption32}. 
Select $c_i\geq 0$ for all $i\in \I$ and define
\begin{equation}
C_1:= 
\sum_{i=1}^{ N} c_i N_0^i  \ln (\bar r(i,i))+ N_S\frac{\ln(R(L))}{L}+ \ln( \hat R(L) ). \label{H2_S}
\end{equation}
If there exist 
$C_0 \in \R$ and $\lambda_0>0$ % \in \K_\infty$ %$\lambda_0<0$ 
such that for all $(\nu, \mu) \in   \Lambda$ and  
all $t_0\geq 0$ 
\begin{align}
&\lambda_s(c_s,L) (t-t_0)+r_s(c_s,L) n_\nu(t,t_0) + \sum_{i=1}^{ N}  \lambda_i(c_i) t_a(i,t,t_0) +r_i(c_i) n(i,t,t_0) 
 \es
&\leq C_0- \lambda_0 (t-t_0+n_\mu(t,t_0)+n_\nu(t,t_0)), \quad \forall t\geq t_0, \label{H3}
\end{align}
then 
$\Sigma_{ \Lambda}$
is S-GUES 
with $K$ and $\lambda$ as in \eqref{eq:K_lamb2} defined by $C:=C_0+C_1$ and $\lambda_0$.
\end{theorem}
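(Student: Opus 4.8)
The plan is to reduce the statement to Lemma~\ref{cor:1} by verifying its hypothesis \eqref{condi2c}. Concretely, I would show that under \eqref{H3} the integral $\int_{t_0^+}^t \lambda(\tau)+r(\tau)\,d\tau$ is bounded above by $C_1$ plus the left-hand side of \eqref{H3}, and hence by $C_0+C_1-\lambda_0(t-t_0+n(t,t_0))$. Since $C=C_0+C_1$ and $n(t,t_0)=n_\mu(t,t_0)+n_\nu(t,t_0)$, this is precisely \eqref{condi2c}, so the conclusion and the expressions \eqref{eq:K_lamb2} for $K$ and $\lambda$ follow at once. The first step is to expand the integral. Using \eqref{func_lambda3} and the activation times, $\int_{t_0^+}^t\lambda(\tau)\,d\tau=\sum_{i\in\I}\bar\lambda(i)\,t_a(i,t,t_0)$; using \eqref{func_mu3} and the decomposition $\sigma=s(\nu)\cup\mu$, the jump contribution splits as $\int_{t_0^+}^t r(\tau)\,d\tau=\sum_{i\in\I}n(i,t,t_0)\ln\bar r(i,i)+\sum_{\tau_k\in s(\nu)\cap(t_0,t]}\ln\bar r(\nu(\tau_k^-),\nu(\tau_k))$, where the first sum collects the nonswitching impulses, for which $\nu(\tau_k^-)=\nu(\tau_k)$.

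Next I would bound the flow plus nonswitching-impulse contribution. Writing $\ln\bar r(i,i)=c_i\ln\bar r(i,i)+(1-c_i)\ln\bar r(i,i)$ and applying Assumption~\ref{assumption32}, the key observation is that \eqref{n_imp>1} when $\bar r(i,i)\ge1$ (so $\ln\bar r(i,i)\ge0$) and \eqref{n_imp<1} when $\bar r(i,i)<1$ (so $\ln\bar r(i,i)<0$) both yield $c_i\,n(i,t,t_0)\ln\bar r(i,i)\le c_iN_0^i\ln\bar r(i,i)+c_i\frac{\ln\bar r(i,i)}{T_J^i}t_a(i,t,t_0)$; the sign of $\ln\bar r(i,i)$ and the matching direction of the inequality in Assumption~\ref{assumption32} are exactly what make this hold in both cases. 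Summing over $i$ and recombining with $\sum_{i\in\I}\bar\lambda(i)t_a(i,t,t_0)$ produces $\sum_{i\in\I}c_iN_0^i\ln\bar r(i,i)$ (a piece of $C_1$) together with $\sum_{i\in\I}[\lambda_i(c_i)t_a(i,t,t_0)+r_i(c_i)n(i,t,t_0)]$ in the notation of \eqref{eq:alf_i}.

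The main obstacle is the switching contribution $S:=\sum_{\tau_k\in s(\nu)\cap(t_0,t]}\ln\bar r(\nu(\tau_k^-),\nu(\tau_k))$. Listing the $m:=n_\nu(t,t_0)$ modes at the switching instants in $(t_0,t]$ as $p_0\to p_1\to\cdots\to p_m$, constraint \eqref{eq:Jconstraint} makes every consecutive sub-sequence admissible; grouping the $m$ transitions into $q:=\lfloor m/L\rfloor$ blocks of length $L$ followed by a remainder of at most $L-1$ transitions and invoking \eqref{Rell} and \eqref{hatR} gives $S\le q\ln R(L)+\ln\hat R(L)$, since each full block contributes at most $\ln R(L)$ and the remainder at most $\ln\hat R(L)$ (with the convention $R(0)=1$ covering an empty remainder). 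I then convert $q$ into $m/L$: when $R(L)\ge1$ (so $\ln R(L)\ge0$) I use $q\le m/L$, while when $R(L)<1$ (so $\ln R(L)<0$) I use $q\ge m/L-1$, the latter producing an extra $-\ln R(L)$ that is exactly the contribution of the $-L$ summand of $N_S$ in \eqref{N0TSnus} to $N_S\frac{\ln R(L)}{L}$. Writing $\frac{m}{L}\ln R(L)=c_s\frac{m}{L}\ln R(L)+(1-c_s)\frac{m}{L}\ln R(L)$ and bounding $c_s\frac{m}{L}\ln R(L)$ through Assumption~\ref{ass:jumps}a) when $R(L)\ge1$ and Assumption~\ref{ass:jumps}b) when $R(L)<1$—the sign of $\ln R(L)$ again selecting the correct bound on $n_\nu$—I obtain $S\le\lambda_s(c_s,L)(t-t_0)+r_s(c_s,L)\,n_\nu(t,t_0)+N_S\frac{\ln R(L)}{L}+\ln\hat R(L)$, with $\lambda_s,r_s$ as in \eqref{eq:alf_nu}.

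Finally I would assemble the three estimates. Their constant terms combine to $\sum_{i\in\I}c_iN_0^i\ln\bar r(i,i)+N_S\frac{\ln R(L)}{L}+\ln\hat R(L)=C_1$ as in \eqref{H2_S}, while the remaining time- and state-dependent terms are exactly the left-hand side of \eqref{H3}. Bounding the latter by \eqref{H3} and adding $C_1$ yields $\int_{t_0^+}^t\lambda+r\,d\tau\le C_0+C_1-\lambda_0(t-t_0+n(t,t_0))$, and Lemma~\ref{cor:1} closes the argument. I expect the delicate point to be the uniform treatment of the two regimes $R(L)\ge1$ and $R(L)<1$ in the switching estimate—in particular justifying the block decomposition against the admissibility constraint \eqref{eq:Jconstraint} and correctly accounting for the floor/remainder discrepancy through the $-L$ term in $N_S$—rather than the more routine flow and impulse bookkeeping.
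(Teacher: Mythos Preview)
Your proposal is correct and follows essentially the same approach as the paper's own proof: expand $\int_{t_0^+}^t(\lambda+r)\,d\tau$ into flow, nonswitching-impulse, and switching contributions; bound each using the $c_i$/$c_s$ splittings together with Assumptions~\ref{assumption32} and~\ref{ass:jumps} (with the block-of-length-$L$ decomposition and the floor/remainder handling via $\hat R(L)$ and the $-L$ term in $N_S$); then assemble to obtain $C_1$ plus the left-hand side of \eqref{H3} and invoke Lemma~\ref{cor:1}. The correspondence with the paper's argument is essentially one-to-one.
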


\begin{proof}
Let $(\nu,\mu)\in \Lambda$.  Define $\hat r_k :=\bar r(\nu(s_k^-),\nu(s_k)) $ where $s_k \in s(\nu)$ for all $k\in \N$ are the
ordered switching times.
Define  $\ell := \min \{i \in \N: s_i \in s(\nu), s_i >t_0 \}$.
Thus, from the definitions of $\lambda(t)$ and $r(t)$ in \eqref{func_lambda3} and \eqref{func_mu3}, for all $t\geq t_0 \geq 0$ we have 
\begin{equation}
\int_{t_0^+}^t  \lambda(\tau)+ r(\tau) d \tau=\int_{t_0^+}^t  \lambda(\tau) d \tau+\int_{t_0^+}^t r(\tau) d \tau  
=  \sum_{i=1}^{ N}  \lp \bar \lambda(i) t_a(i,t,t_0)+    \ln \lp \bar r(i,i) \rp n(i,t,t_0) \rp +  \sum_{k=\ell}^{ n_\nu(t,t_0)+\ell-1}  \ln \lp \hat r_k \rp. \label{d1-10}
\end{equation}
Note that, for a given  $\nu$, 
the number of completed \blue{admissible mode sequences of length}  $L$ 
occurring during the interval $t-t_0$
is given by $\left\lfloor \frac{n_\nu(t,t_0)}{L} \right\rfloor$ while
$mod (n_v(t,t_0),L) \leq L-1$ determines the remaining  number of switchings in that interval.
The last addend  of \eqref{d1-10} can be bounded for all $t\geq t_0$ 
as
\begin{align}
\sum_{k=\ell}^{ n_\nu(t,t_0)+\ell-1}  \ln \lp \hat r_k \rp &= \ln \lp \prod_{k=\ell}^{ n_\nu(t,t_0)+\ell-1}   \hat  r_k \rp   
=\ln \lp   \overbrace{\hat  r_\ell \hat  r_{\ell+1} \hdots \hat r_{n_\nu(t,t_0)+\ell-1} }^{n_\nu(t,t_0) \medspace \text{factors}}\rp   
\es
&\leq \ln \lp R(mod{(n_\nu(t,t_0),L)}) R(L)^{\left\lfloor \frac{n_\nu(t,t_0)}{L}\right \rfloor} \rp \es
&= \ln \lp R(L) \rp  \left\lfloor  \frac{n_\nu(t,t_0)}{L}\right \rfloor + \ln \lp   R(mod{(n_\nu(t,t_0),L)}) \rp 
\leq \ln \lp R(L) \rp \left\lfloor  \frac{n_\nu(t,t_0)}{L}\right \rfloor  + \ln ( \hat R ).
\end{align}
Given $c_s \geq 0$.
If $R(L)\geq 1$ then $\ln(R(L))\geq 0$. Using Assumption~\ref{ass:jumps} item a) we can bound
\begin{align}
\ln \lp R(L) \rp  \left\lfloor  \frac{n_\nu(t,t_0)}{L}  \right\rfloor &\leq \frac{\ln (R(L))}{L}  n_\nu(t,t_0)  =\frac{\ln (R(L))}{L} \lp n_\nu(t,t_0)  (1-c_s) + n_\nu(t,t_0)  c_s \rp \notag \\
&\leq \frac{\ln (R(L))}{L} \lp n_\nu(t,t_0)  (1-c_s)+ \lp \bar N_0 + \frac{t-t_0}{\bar T_S} \rp   c_s  \rp \notag \\
&= \frac{\ln (R(L))}{L} \lp  c_s  \bar N_0 +(1-c_s) n_\nu(t,t_0)   +   \frac{c_s}{ \bar T_S} (t-t_0)    \rp.
\label{cota1-w}
\end{align}
If $R(L)< 1$ then $\ln(R(L))< 0$. Using Assumption~\ref{ass:jumps} item b), we obtain
\begin{align}
\ln (R(L))  \left\lfloor  \frac{n_\nu(t,t_0)}{L}  \right\rfloor &\leq  \ln(R(L)) \lp  \frac{n_\nu(t,t_0)}{L}  -1 \rp   =  \frac{\ln(R(L))}{L} \lp  n_\nu(t,t_0) (1-c_s)+n_\nu(t,t_0) c_s  -L \rp  \notag \\
&\leq \frac{\ln(R(L))}{L} \lp  n_\nu(t,t_0) (1-c_s)+\lp{\munderbar N}_0 + \frac{t-t_0}{{\munderbar T}_\nu} \rp c_s  -L \rp  \notag \\
&=\frac{\ln(R(L))}{L} \lp c_v \munderbar N_0-L + (1-c_s) n_\nu(t,t_0) + \frac{c_s}{ \munderbar T_S} (t-t_0)  \rp.
\label{cota2-w}
\end{align}
According to \eqref{cota1-w}, \eqref{cota2-w} and the definitions of $T_S$ and $N_S$ in \eqref{N0TSnus} we have
  \begin{align}
 \sum_{k=\ell}^{ n_\nu(t,t_0)+\ell-1}  \ln \lp \bar r_k \rp   & \leq \frac{\ln(R(L))}{L} \lp  (1-c_s) n_\nu(t,t_0) + \frac{c_s}{ T_S} (t-t_0)  \rp  %\notag 
 %&
 +  N_S \frac{\ln(R(L))}{L}+ \ln ( \hat R )  \label{d1-w} \\
 &=  r_s(c_s,L) n_\nu(t,t_0) + \lambda_s(c_s,L) (t-t_0)   +  N_S \frac{\ln(R(L))}{L}+ \ln ( \hat R ). \notag
% \label{eq:RStrong2}
\end{align}    
To simplify the notation, we rename $t_a(i)=t_a(i,t,t_0)$ and $n(i)=n(i,t,t_0)$.
By Assumption~\ref{assumption32}, we can bound each term of the first addend in \eqref{d1-10} as
\begin{align}
\bar \lambda(i) t_a(i)+\ln \lp \bar r(i,i) \rp n(i)
 &= \bar \lambda(i) t_a(i)+\ln \lp \bar r(i,i) \rp n(i) c_i+\ln \lp \bar r(i,i) \rp n(i) (1-c_i) \es
&\leq \bar \lambda(i) t_a(i)+\ln \lp \bar r(i,i) \rp  c_i \lp  N_0^i+ \frac{t_a(i,t,t_0)}{ T_J^i}  \rp  +\ln \lp \bar r(i,i) \rp n(i) (1-c_i) \es
&=\ln \lp \bar r(i,i) \rp  (1-c_i) n(i)+\lp \bar \lambda(i) +  c_i \frac{\ln \lp \bar r(i,i) \rp}{ T_J^i}  \rp  t_a(i) %\notag \\
%&+
+N_0^i\ln \lp \bar r(i,i) \rp  c_i  \es
&= r_i(c_i) n(i)+ \lambda_i(c_i) t_a(i) + c_iN_0^i\ln \lp \bar r(i,i) \rp.   \label{eq:64}
\end{align}
By bounding \eqref{d1-10} using the previous bounds % \eqref{eq:RStrong}, \eqref{eq:este1} and \eqref{eq:este2},
 and reordering, we obtain %and the definition of $\tilde\lambda$ in \eqref{K1} we have 
\begin{align*}
    \int_{t_0^+}^t \lambda(\tau)+r(\tau) d\tau &\leq  
    \overbrace{%   
  \sum_{i=1}^{N}  c_iN_0^i\ln \lp \bar r(i,i) \rp
+ N_S \frac{\ln(R(L))}{L}+\ln ( \hat R )}^{C_1} 
   +\sum_{i=1}^{ N} r_i(c_i) n(i,t,t_0)+ \lambda_i(c_i) t_a(i,t,t_0)  \es
    &+r_s(c_s,L) n_\nu(t,t_0) + \lambda_s(c_s,L) (t-t_0),
\end{align*}
thus the result follows from (\ref{H3}) and Lemma~\ref{cor:1}. 
\end{proof}
\jlma{Theorem~\ref{thm32_strong2} %
provides a method to ensure strong %or weak
stability %properties
for a broad range of switched impulsive systems based on multiple Lyapunov-like functions.
The values of $c_i$ for each $i \in \I$ in \eqref{eq:alf_i} can be employed to assign weights to the continuous and impulsive parts of each subsystem when applying the direct or reverse bounds in Assumption~\ref{assumption32}. Similarly, the value of $c_s$ in \eqref{eq:alf_nu} represents a trade-off between the number of switchings and the continuous-time, as a consequence of the ADT bounds specified in Assumption~\ref{ass:jumps}. Note that when $c_s=1$, all the effects of the switchings are computed in continuous-time as $r_s(1,L)=0$, while for $c_s=0$, the effects are computed according to the number of switchings as $\lambda_s(0,L)=0$. Similar comments apply to the quantities $\lambda_i(c_i)$ and $r_i(c_i)$ for each mode.
The value of $T_S$ in \eqref{N0TSnus} 
depends on whether $R(L)$ is greater or less than one.
Thus, the value of $R(L)$ may result in either
positive or negative terms $\lambda_s(c_s,L)$ and $r(c_s,L)$ in \eqref{H3} depending on the choice of $c_s$. Note that if $\ln(R(L))$ is negative, then an increase in the number of switchings between subsystems $n_\nu$ contributes to stability; therefore, we assume the number of switchings is bounded from below as per Assumption~\ref{ass:jumps}~b). On the contrary, if $\ln(R(L))$ is positive, we need to impose an upper bound on the number of switchings $n_\nu$ in order to ensure stability, therefore we require Assumption~\ref{ass:jumps}~a). Moreover, the decay rate determined by  \eqref{H3}
can be improved by increasing the length of the sequences of admissible modes in the computation of $R(L)$, as shown below.}
\jlma{\begin{lemma}
\label{lemma:L}
Consider  $L_0,M \in \N$ given and define $L:=L_0M$ then
\begin{equation}
\frac{\ln \lp R(L) \rp }{L} \leq  \frac{\ln \lp R(L_0)\rp}{L_0} \leq \ln(R(1)).
\end{equation}
\end{lemma}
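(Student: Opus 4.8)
The plan is to reduce both displayed inequalities to the single submultiplicative bound $R(L_0M)\le R(L_0)^M$; taking logarithms then finishes everything. First I would note that, since $\I$ is finite, for each fixed length the collection of admissible mode sequences is finite, so the maximum defining $R(\cdot)$ in \eqref{Rell} is attained whenever the feasible set is nonempty. I will therefore work with a maximizing sequence for $R(L_0M)$, tacitly assuming (as the statement implicitly does) that admissible sequences of the relevant lengths exist, so that every $R(\cdot)$ appearing is well defined.

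So let $j_0,j_1,\ldots,j_{L}$ with $L:=L_0M$ be an admissible mode sequence attaining $R(L)=\prod_{\ell=1}^{L}\bar r(j_{\ell-1},j_\ell)$, meaning $(j_{\ell-1},j_\ell)\in\J$ for every $\ell$. The key idea is to cut this sequence into $M$ consecutive blocks of length $L_0$, the $k$-th block being $j_{(k-1)L_0},\ldots,j_{kL_0}$ for $k=1,\dots,M$. Each such block is built from $L_0$ of the original (hence admissible) transitions, so each block is itself an admissible mode sequence of length $L_0$, and therefore its product of weights is bounded above by $R(L_0)$ by the maximality in \eqref{Rell}. Multiplying the $M$ block products reconstructs the full product, giving
\begin{equation*}
R(L)=\prod_{k=1}^{M}\ \prod_{\ell=(k-1)L_0+1}^{kL_0}\bar r(j_{\ell-1},j_\ell)\ \le\ R(L_0)^M .
\end{equation*}

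Taking logarithms (monotone, so the inequality is preserved irrespective of the sign of $\ln R$) yields $\ln R(L)\le M\ln R(L_0)$, and dividing by $L=L_0M>0$ produces the first inequality $\tfrac{\ln R(L)}{L}\le\tfrac{\ln R(L_0)}{L_0}$. The second inequality is exactly this same statement applied with block length $1$: decomposing any admissible length-$L_0$ sequence into its $L_0$ single transitions and bounding each by $R(1)$ gives $R(L_0)\le R(1)^{L_0}$, i.e. $\tfrac{\ln R(L_0)}{L_0}\le\ln R(1)$.

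I do not expect a genuine obstacle, as this is a Fekete-type subadditivity argument. The only points needing a little care are (i) justifying that each block is a bona fide admissible sequence, so that $R(L_0)$ really upper-bounds its weight product, which follows because the blocks inherit admissibility of their transitions from the full sequence; and (ii) confirming that attainability of the maxima is not an issue, which holds precisely because $\I$ is finite.
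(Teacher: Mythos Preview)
Your proposal is correct and follows essentially the same approach as the paper: both rest on the submultiplicative bound $R(L_0M)\le R(L_0)^M$, take logarithms, and divide by $L$, then specialize to $L_0=1$ for the second inequality. The paper states $R(L)\le R(L_0)^M$ as an immediate consequence of the definition without the block-decomposition details you supply, so your write-up is simply a more explicit version of the same argument.
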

\begin{proof}
By the definition of $R(L)$ in \eqref{Rell} we have $R(L) \leq R(L_0)^{M}$. Taking logarithm, multiplying by $L_0$ and operating yields
\begin{equation*}
\frac{\ln \lp R(L) \rp }{L} \leq  \frac{\ln \lp R(L_0)\rp}{L_0}.
\end{equation*}
Replacing $L$ and $L_0$ by, respectively, $L_0$ and $1$ we obtain $\frac{\ln \lp R(L_0) \rp}{L_0}\le \ln \lp R(1) \rp$.
\end{proof}
}
\jlma{Increasing the $L$ considered may improve the decay rate, but it may worsen 
the bound at initial times if $R(L)$ is less than $1$. This is due to 
the growth of the value of $C_1$ in \eqref{H2_S}, which depends on $L$;
note that in this case $N_S$ is defined as $N_S = \munderbar{N}_0 - L \le -1-L$.
However, combining the bounds obtained for different mode sequence lengths $L$ can result in a better overall bound. }

Theorem~\ref{thm32_strong2} also covers the case of impulsive systems without switching behavior. For this case, $s(\nu)=\emptyset$, and only one flow equation is considered. Thus, as there are no switchings, we have $L=0$ and by definition $R(0)=1$. Subsequently, we obtain $n_{\nu}(t,t_0)=0$, $\lambda_s(c_s,0)=0$ and $r_s(c_s,0)=0$ which can be substituted into \eqref{H3}. % or  \eqref{H1}.

For systems in which impulsive behaviour only occurs at switching times,
we define $\mu:=\emptyset$, which implies $n_\mu(t,t_0)\equiv 0$ for all $t\geq t_0 \geq 0$, resulting in $n(t,t_0)=n_\nu(t,t_0)$. Therefore, Assumption~\ref{assumption32} is not required.
This simplifies the results in Theorem~\ref{thm32_strong2}, as stated in Corollary~\ref{thm32_strong_simpl1}. This result also covers the case of switching systems without impulsive behavior at switching instants, by selecting $g_{i,j}=\mathrm{id}$ for all $(i,j)\in \J$.

\begin{corollary}[Of Theorem~\ref{thm32_strong2}]
\label{thm32_strong_simpl1}
Suppose that the family of switched impulsive systems $\Sigma_\Lambda$ satisfies 
$n_\mu(t,t_0) \equiv 0$ for all $t\ge t_0\ge 0$ and all $(\nu,\mu)\in \Lambda$ and that 
assumptions of Theorem~\ref{thm32_strong2} hold. Define $C_1:=N_S\frac{\ln(R(L))}{L}+ \ln( \hat R )$.
If there exist 
$C_0 \in \R$ and $\lambda_0>0$ %\in \K_\infty$ 
such that for all $(\nu,\mu)\in \Lambda$ and all
$t_0\geq 0$ 
\begin{equation}
\sum_{i=1}^{ N} \bar \lambda(i) t_a(i,t,t_0) 
+ \lambda_s(c_s,L) (t-t_0)+r_s(c_s,L) n_\nu(t,t_0) 
\leq C_0- \lambda_0 (t-t_0+n_\nu(t,t_0)), \quad \forall t\geq t_0, \label{H3_simp}
\end{equation}
then $\Sigma_{ \Lambda}$ %satisfying \eqref{H3_simp} %of the form \eqref{impulsive_new}
is S-GUES 
with $K$ and $\lambda$ as in \eqref{eq:K_lamb} defined by $C:=C_0+C_1$ and $\lambda_0$.
\end{corollary}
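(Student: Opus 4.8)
The plan is to obtain the corollary as a direct specialization of Theorem~\ref{thm32_strong2}, using the hypothesis $n_\mu(t,t_0)\equiv 0$ to annihilate every quantity tied to nonswitching impulses. First I would note that $n_\mu(t,t_0)\equiv 0$ for all $(\nu,\mu)\in\Lambda$ forces $\mu=\emptyset$, so that $n(i,t,t_0)=0$ for every $i\in\I$ and all $t\ge t_0$, and $n(t,t_0)=n_\nu(t,t_0)$. Assumption~\ref{assumption32}, which is among the standing hypotheses, is in fact automatically consistent here: with $n(i,t,t_0)\equiv 0$ one may take $T_J^i=\infty$ for each mode, so that \eqref{n_imp>1}--\eqref{n_imp<1} reduce to the sign conditions $0\le N_0^i$ and $0\ge N_0^i$, which hold by the conventions $N_0^i\ge 1$ and $N_0^i\le -1$ respectively.

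Next I would invoke Theorem~\ref{thm32_strong2} with the specific choice $c_i=0$ for all $i\in\I$, inheriting the values of $L$ and $c_s$ from the corollary's hypotheses. With $c_i=0$ the mode-wise coefficients in \eqref{eq:alf_i} collapse to $\lambda_i(0)=\bar\lambda(i)$ and $r_i(0)=\ln(\bar r(i,i))$; since $n(i,t,t_0)\equiv 0$, each running product $r_i(0)\,n(i,t,t_0)$ vanishes, and the constant contribution $\sum_{i=1}^{N}c_i N_0^i\ln(\bar r(i,i))$ in the definition \eqref{H2_S} of $C_1$ is identically zero. Hence the full $C_1$ of Theorem~\ref{thm32_strong2} reduces to exactly $N_S\frac{\ln(R(L))}{L}+\ln(\hat R)$, which is the $C_1$ declared in the corollary.

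Carrying these substitutions through condition \eqref{H3}, its left-hand side becomes $\sum_{i=1}^{N}\bar\lambda(i)\,t_a(i,t,t_0)+\lambda_s(c_s,L)(t-t_0)+r_s(c_s,L)\,n_\nu(t,t_0)$, while its right-hand side becomes $C_0-\lambda_0(t-t_0+n_\nu(t,t_0))$ because $n_\mu(t,t_0)\equiv 0$; this is precisely the corollary's hypothesis \eqref{H3_simp}. Thus \eqref{H3_simp} is literally \eqref{H3} specialized to $c_i=0$, and applying Theorem~\ref{thm32_strong2} yields S-GUES with $K$ and $\lambda$ as in \eqref{eq:K_lamb2} for $C=C_0+C_1$ and $\lambda_0$, completing the argument.

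Because the result is obtained by specialization rather than by a new estimate, I do not expect a genuine analytic obstacle. The only point requiring care is bookkeeping: one must check that the choice $c_i=0$ simultaneously removes both the running impulse term $r_i(c_i)\,n(i,t,t_0)$ and the constant $c_i N_0^i\ln(\bar r(i,i))$, while leaving $\lambda_i(c_i)=\bar\lambda(i)$, so that no residual $\bar r(i,i)$-dependence survives. The closing remark that trivial jumps $g_{i,j}=\mathrm{id}$ are also covered is then immediate, since in that case $\bar r(i,j)=1$ forces $R(L)=1$ and $\hat R=1$, and needs no separate treatment.
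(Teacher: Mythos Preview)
Your proposal is correct and follows essentially the same route as the paper: both select $c_i=0$ for every $i\in\I$, observe that $n_\mu(t,t_0)\equiv 0$ forces $n(i,t,t_0)\equiv 0$, and then read off that \eqref{H3} collapses to \eqref{H3_simp} with the reduced $C_1$. Your additional remarks about the automatic consistency of Assumption~\ref{assumption32} via $T_J^i=\infty$ and about the $g_{i,j}=\mathrm{id}$ case are not in the paper's proof but are harmless elaborations.
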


\begin{proof}
Select $c_i=0$ for all $i\in \I$, then from \eqref{eq:alf_i} and \eqref{H2_S}
we obtain $\lambda_i(c_i)=\bar \lambda(i)$ and $C_1=N_S\frac{\ln(R(L))}{L}+ \ln( \hat R )$, respectively. 
Note that $n_\mu(t,t_0)\equiv 0$ implies that $n(i,t,t_0) \equiv 0$ for all $i\in \I$. Setting $\lambda_i(c_i)=\bar \lambda(i)$ and $n(i,t,t_0)\equiv 0$ in \eqref{H3} we obtain \eqref{H3_simp}.
\end{proof}
\jlma{Theorem~5.1 of \cite{mancilla2020uniform} provides related results to ensure strong stability for a class of switched impulsive systems, but that approach exhibits some differences/drawbacks in comparison to the current results. Specifically: a) the sequence of switching times and that of impulse times are the same in Theorem~5.1 of \cite{mancilla2020uniform}, therefore our results apply to a larger class of systems; b) Theorem~5.1 of \cite{mancilla2020uniform} imposes single coefficients for the sets of stabilizing, destabilizing, and neutral subsystems. The neutral subsystems are those that are not guaranteed to be stabilizing or destabilizing. This may lead to extremely conservative bounds
as, for example, stabilizing subsystems that decay at different
rates, fast or slow in the flow or jump maps, are required to share
the same coefficients in \eqref{ass03-2} and \eqref{ass03-3}, respectively.
In the present Theorem~\ref{thm32_strong2}, the coefficients $\bar \lambda(i)$ and $\bar r(i,i)$ correspond to each subsystem, which allows to compute better bounds.}

\subsection{Activation time conditions for group of modes} 
This section presents conditions 
for ensuring strong GUES %
by imposing bounds on the activation times of each subsystem. 
Specifically, assuming that the constraints of Assumption~\ref{assumption32} are satisfied,
we will show that it is sufficient to assume additionally that an affine function of the total elapsed time $(t-t_0)$  bounds the activation time of destabilizing subsystems.

Consider $\lambda_i(1)= \bar \lambda(i)+\frac{\ln(\bar r(i,i))}{T_J^i}$ for all $i\in \I$  as defined by \eqref{eq:alf_i} with $c_i=1$ for all $i\in \I$, and partition the index set $\I$ as 
$\I=\I^u \cup \I^s$,
with %
\begin{align}
\begin{split}
\I^u&:=\{i \in \I: \lambda_i(1)\geq 0\}, \\
\I^s&:=\{i \in \I: \lambda_i(1)<0\}, \\
%\I^0&:=\{i \in \I: \lambda_i(1)=0\}. 
\end{split}
\label{eq:sets_times}
\end{align}
Activation-time constraints will be placed on groups of subsystems, so that less stringent conditions can be obtained. The sets $\I^u$ and $\I^s$ 
are therefore partitioned as follows
\begin{align}
&\I^u=\I_1\cup \hdots\cup \I_{N^u}, \notag \\
&\I^s=\I_{N^u+1} \cup \hdots \cup \I_{N^\I}, \notag 
\end{align}
with $N^u$ and $N^s$
the number of subsystem groups within $\I^u$ and $\I^s$, %and $\I^0$
respectively, with $N^\I:=N^u+N^s$ as the total number of subsystem groups.
We assume the following constraints on the activation times.
\begin{assumption}[Regarding activation times]
\label{ass_times}
There exist real numbers $N_{a}^j$ and $T_{a}^j$ satisfying
\begin{align}
\label{eq:quantities}
    \begin{split}
    &N_{a}^j \geq 0,\quad   \forall j: \I_j \subset \I^u \cup \I^s, \\% \label{Na_i} \\
& T_{a}^j \geq 0, \phantom{||||||}  \forall j: \I_j \subset \I^u,   \\
& T_{a}^j \leq 0, \phantom{||||||}  \forall j: \I_j \subset \I^s,
\end{split}
\end{align}
such that for all $t \geq t_0$ and $t_0\geq 0$ 
\begin{subequations}
\label{cond_time_weak}
    \begin{equation}
        \sum_{i\in \I_j}t_a(i,t,t_0)\leq T_{a}^j+N_{a}^j(t-t_0), \quad
    \forall j: \I_j \subset  \I^u, \label{cond_time}\\
    \end{equation}
    \begin{equation}
        \sum_{i\in \I_j} t_a(i,t,t_0)\geq T_{a}^j+N_{a}^j(t-t_0), \quad 
    \forall j: \I_j \subset \I^s. \phantom{,}\label{cond_time2}
    \end{equation}
\end{subequations}    
\end{assumption}
\begin{remark}
\label{remark:1}
\ajv{
Considering the equality expressed in \eqref{time_1}, which states that the sum of all activation times equals the total elapsed time, in certain cases, we may obtain better bounds if the quantities in \eqref{eq:quantities} are chosen to satisfy
\begin{equation}
\sum_{1 \leq j \leq N^\I} T_a^j = 0 \quad \text{and} \quad \sum_{1 \leq j \leq N^\I} N_a^j = 1.
\end{equation}
}
\end{remark}

\begin{remark}
\label{remark_time}
When all subsets $\I_j$ are chosen such that each one contains exactly one subsystem $i\in \I$, mode-dependent conditions are obtained from \eqref{cond_time_weak}.
Namely, for all $i \in \I$, we have
\begin{subequations}
\label{cond_time_weak_mode}
                \begin{equation}
    t_a(i,t,t_0)\leq T_{a}^i+N_{a}^i(t-t_0), \quad
    \forall i:\lambda_i(1)\ge 0, \label{cond_time_mode}
    \end{equation}
    \begin{equation}
    t_a(i,t,t_0)\geq T_{a}^i+N_{a}^i(t-t_0), \quad 
    \forall i: \lambda_i(1)<0. \label{cond_time2_mode}        
    \end{equation}
\end{subequations}
\end{remark}

To define the 
quantities $N_a^j$, $T_a^j$ satisfying \eqref{cond_time_weak} 
in Assumption~\ref{ass_times}, we first need to obtain $\lambda_i(1)$ for all $i\in \I$. The computation of each $\lambda_i(1)$ %
depends on the functions $V_i$ and on the values of $ T_J^i$ of the conditions that bound the frequency of the impulses in Assumption~\ref{assumption32}. Similarly to the handling of the switchings in Assumption~\ref{ass:jumps}, Assumption~\ref{ass_times}  
constrains the activation times of groups of subsystems to preserve GUES.
This assumption is much less restrictive than typical activation-time constraints on individual subsystems as the ones described in Remark~\ref{remark_time}.
To see this, consider an example where the subsystems are grouped into sets
of unstable subsystems, stable subsystems with slow convergence and stable subsystems with fast convergence. To ensure stability of the switched system, it is clear that the stable subsystems should be more active in relation to the unstable ones. Grouping the subsystems allows to place activation-time constraints on each group, irrespective of the individual activation times of each subsystem in the group.
For instance, by limiting the activation times of the unstable and the stable but slow groups in relation to that of the stable and fast, not only stability of the switched system can be ensured but also better estimates of the maximum overshoot and decay rate can be obtained. 

\subsection{Sufficient conditions for S-GUES} 
We are almost ready to combine all the presented assumptions and results to establish sufficient conditions for S-GUES, but first, we need the following Lemma~\ref{lemma:r_i_SGUAS},
\ajv{whose proof is provided in Appendix~\ref{appendix:2}}.
It shows that, under the mode-dependent conditions of Assumption~\ref{assumption32},
is possible
\ajv{to select positive quantities $c_i > 0$ in order to derive adequate bounds.} 

\begin{lemma}
\label{lemma:r_i_SGUAS}
Suppose that Assumption~\ref{assumption32} holds. 
Consider $r_i(c_i)$ and $\lambda_i(c_i)$ from \eqref{eq:alf_i}  for all $i\in \I$.
Then, for all $i\in \I$ such that $\bar r(i,i)\neq 1$, there exist  $c_i >0$  so that
\begin{enumerate}[i)]
\item  $r_i(c_i)<0$ and $\lambda_i(c_i)<0$ for all $ \lambda_i(1)<0$, 
\item $r_i(c_i)<0$ and $\lambda_i(c_i)>0$ for all  $ \lambda_i(1)\geq 0$. 
\end{enumerate}
\end{lemma}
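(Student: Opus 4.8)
The plan is to treat each index $i$ with $\bar r(i,i)\neq 1$ separately and to exploit that, by \eqref{eq:alf_i}, both $r_i(\cdot)$ and $\lambda_i(\cdot)$ are affine functions of the scalar $c_i$. Writing $a_i:=\ln(\bar r(i,i))/T_J^i$ for the slope of $\lambda_i$, we have $\lambda_i(c_i)=\bar\lambda(i)+c_i a_i$ and $r_i(c_i)=(1-c_i)\ln(\bar r(i,i))$, with the normalisations $r_i(1)=0$ and $\lambda_i(1)=\bar\lambda(i)+a_i$. The whole statement then reduces to a one-dimensional sign analysis in $c_i$.

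First I would pin down the admissible set of $c_i$ that forces $r_i(c_i)<0$, which depends only on the sign of $\ln(\bar r(i,i))$: if $\bar r(i,i)>1$ then $r_i(c_i)<0$ exactly when $c_i>1$, while if $\bar r(i,i)<1$ then $r_i(c_i)<0$ exactly when $0<c_i<1$. Since $T_J^i>0$, the sign of the slope $a_i$ coincides with that of $\ln(\bar r(i,i))$, so $a_i\ge 0$ in the first case and $a_i\le 0$ in the second, with $a_i=0$ only when $T_J^i=\infty$. The key structural fact is that, in both cases, pushing $c_i$ away from $1$ into its admissible interval moves $\lambda_i(c_i)$ upward relative to $\lambda_i(1)$: for $\bar r(i,i)>1$ one increases $c_i$ past $1$ against a nonnegative slope, and for $\bar r(i,i)<1$ one decreases $c_i$ below $1$ against a nonpositive slope, so that the product $a_i(c_i-1)$ is again nonnegative.

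With this in hand both items follow from continuity and monotonicity of the affine map $c_i\mapsto\lambda_i(c_i)$ restricted to the admissible interval. For the case $\lambda_i(1)<0$ (target $\lambda_i(c_i)<0$), since $\lambda_i$ is continuous and $\lambda_i(1)<0$ there is a one-sided neighbourhood of $c_i=1$ on which $\lambda_i$ stays negative; intersecting it with the admissible interval for $r_i$, which always has $1$ as an endpoint, produces a nonempty set of valid $c_i>0$. For the case $\lambda_i(1)\ge 0$ (target $\lambda_i(c_i)>0$), whenever $a_i\neq 0$ any $c_i$ in the admissible interval gives $\lambda_i(c_i)=\lambda_i(1)+a_i(c_i-1)>\lambda_i(1)\ge 0$ by the sign coordination just described.

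The main obstacle is precisely the simultaneity of the two requirements: one must check that the interval of $c_i$ giving $r_i(c_i)<0$, which lies on a prescribed side of $c_i=1$, overlaps the interval giving the desired sign of $\lambda_i(c_i)$. This is exactly what the matched signs of $\ln(\bar r(i,i))$ and of the slope $a_i$ guarantee, and it explains why the split along $\lambda_i(1)\ge 0$ versus $\lambda_i(1)<0$ is the right one. The only delicate configuration is the degenerate one $T_J^i=\infty$, where $a_i=0$ and $\lambda_i$ is constant equal to $\bar\lambda(i)=\lambda_i(1)$; I would dispatch this subcase first, reading the conclusion directly off the sign of $\lambda_i(1)$, so that strict monotonicity ($a_i\neq 0$) is available in the remaining argument.
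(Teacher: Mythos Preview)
Your proposal is correct and follows essentially the same strategy as the paper: both exploit that $r_i$ and $\lambda_i$ are affine in $c_i$, determine the admissible half-interval for $c_i$ from the sign of $\ln(\bar r(i,i))$, and then check the sign of $\lambda_i(c_i)$ there. The only real difference is packaging: the paper enumerates six sub-cases according to the sign combinations of $\ln(\bar r(i,i))$ and $\bar\lambda(i)$ and writes down the explicit admissible intervals for $c_i$ in each (these are exactly the entries of Table~\ref{table5}), whereas you replace that enumeration by the single observation $\lambda_i(c_i)-\lambda_i(1)=a_i(c_i-1)\ge 0$ on the admissible set together with continuity at $c_i=1$ for item~i). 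Your route is cleaner for pure existence; the paper's case split has the side benefit of producing the explicit ranges used later. Your flagging of the degenerate case $T_J^i=\infty$ is apt; note that when additionally $\lambda_i(1)=0$ one has $\lambda_i(c_i)\equiv 0$, so strict positivity in item~ii) cannot be attained---this borderline case is not covered by the paper's proof either.
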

Table~\ref{table5} shows how to select appropriate values of  $c_i$ such that 
$r_i(c_i)<0$, according to the proof of Lemma~\ref{lemma:r_i_SGUAS}. This, in turn, may increase the value of $\lambda_i(c_i)$, which bounds the continuous-time evolution of the subsystem. 
By increasing $c_i$, while satisfying the restrictions in Table~\ref{table5}, we also increase the value of $\lambda_i(c_i)$, such that it remains negative in the case of the first column, corresponding to i) of Lemma~\ref{lemma:r_i_SGUAS}, or positive in the case of the second column, corresponding to ii) of Lemma~\ref{lemma:r_i_SGUAS}.
This can be observed in Figure~3.1 which illustrates the results of Lemma~\ref{lemma:r_i_SGUAS}.

\renewcommand{\arraystretch}{1.5}

\begin{table}[ht!]
    \centering
    \begin{tabular}{|l|l|l|}
     \hline 
&$ \lambda_i(1) < 0 $ & $   \lambda_i(1) \geq 0  $  \\ \hline
     $\ln \lp \bar r(i,i) \rp  < 0,  \bar \lambda(i) \leq 0$& $c_i \in(0,1)$& -\\ 
     \hline 
     $\ln \lp \bar r(i,i) \rp  < 0,  \bar \lambda(i)>0$& $c_i \in\left (- \frac{\bar \lambda(i) T_J^i}{\ln(\bar r(i,i))},1\right)$& $c_i \in(0,1)$ \\ \hline 
     $\ln(\bar r(i,i))>0, \bar \lambda(i)<0$& $c_i \in\left(1,- \frac{\bar \lambda(i) T_J^i}{\ln(\bar r(i,i))}\right)$& $c_i \in(1,\infty)$ \\ \hline
    $\ln(\bar r(i,i))>0, \bar \lambda(i) \geq 0$& -& $c_i \in(1,\infty)$ \\ \hline
        $\ln(\bar r(i,i))= 0$& $\nexists$ & $\nexists$ \\ \hline
    \end{tabular}
    \caption{Admissible values of $c_i$ such that $r_i(c_i)=(1-c_i)  \ln \lp \bar r(i,i)\rp<0$ for all possible values of $\ln(\bar r(i,i))$ and $\bar \lambda(i)$ according to Lemma~\ref{lemma:r_i_SGUAS}.}
    \label{table5}
\end{table}
\usetikzlibrary[patterns] 

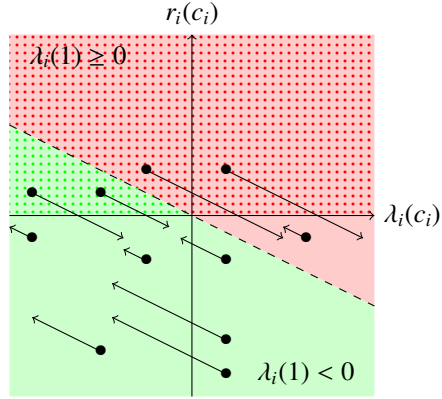
\begin{figure}[ht!]
\label{figure:3}
\begin{center}
\begin{tikzpicture}[scale=0.6]
  % Regions
  \fill[red!20] (0,0) -- (-4,2) -- (-4,4) -- (4,4) -- (4,-2) -- cycle;
  \fill[pattern=dots, , pattern color=red] (0,0) -- (-4,2) -- (-4,4) -- (4,4) -- (4,0) -- cycle;
  \fill[green!20] (0,0) -- (-4,2) -- (-4,-4) -- (4,-4) -- (4,-2) -- cycle;
  \fill[pattern=dots,  pattern color=green] (0,0) -- (-4,0) -- (-4,2) -- cycle;

  % Coordinate system
  \draw[->] (-4,0) -- (4,0) node[right] {$\lambda_i(c_i)$};
  \draw[->] (0,-4) -- (0,4) node[above] {$r_i(c_i)$};
  
  % Straight line
  \draw[dashed,black] (-4,2) -- (4,-2);

  % Regions
  %\node[blue] at (1,-1) {$\hat{\lambda}(i) = 0$};
  \node[black] at (2.5,-3.5) {$\lambda_i(1) < 0$};
  \node[black] at (-2.5,3.5) {$\lambda_i(1) \geq 0$};

  % Points and vectors
  \node at (-2,0.5) {\textbullet};
  \draw[->] (-2,0.5) -- (-2+2*0.75,0.5-1*0.75);
  \node at (-2,-3) {\textbullet};
  \draw[->] (-2,-3) -- (-2-2*0.75,-3+1*0.75);
  
    \node at (-3.5,0.5) {\textbullet};
  \draw[->] (-3.5,0.5) -- (-3.5+2*1,0.5-1*1);
  
    \node at (0.75,-3.5) {\textbullet};
  \draw[->] (0.75,-3.5) -- (0.75-2*1.25,-3.5+1*1.25);
    \node at (0.75,-2.75) {\textbullet};
  \draw[->] (0.75,-2.75) -- (0.75-2*1.25,-2.75+1*1.25);

    \node at (0.75,-1) {\textbullet};
  \draw[->] (0.75,-1) -- (0.75-2*0.5,-1+1*0.5);
      \node at (0.75,1) {\textbullet};
  \draw[->] (0.75,1) -- (0.75+2*1.5,+1-1*1.5);
      \node at (-1,1) {\textbullet};
  \draw[->] (-1,1) -- (-1+2*1.5,+1-1*1.5);
        \node at (-1,-1) {\textbullet};
  \draw[->] (-1,-1) -- (-1-2*0.25,-1+1*0.25);
      \node at (-3.5,-0.5) {\textbullet};
  \draw[->] (-3.5,-0.5) -- (-3.5-2*0.25,-0.5+1*0.25);
        \node at (2.5,-0.5) {\textbullet};
  \draw[->] (2.5,-0.5) -- (2.5-2*0.25,-0.5+1*0.25);
\end{tikzpicture}
\end{center}
\caption{Upper half-plane (dotted pattern): points where \eqref{n_imp>1} holds. Lower half-plane: points where \eqref{n_imp<1} holds.
For different initial values $\lambda_i(0)=\bar \lambda(i)$ and $r_i(0)=\ln(\bar r(i,i))$, the values of $\lambda_i(c_i)$ and $r_i(c_i)$ change in the direction of the arrows as $c_i \geq 0$ increases. %
Green: points that satisfy $\lambda_i(1)<0$. These allow increasing $c_i \geq 0$ in order to %
make both $r_i(c_i)<0$ and $\lambda_i(c_i)<0$.
 Red: points for which $\lambda_i(1)\geq0$. These allow increasing $c_i\ge 0$ in order to make $r_i(c_i)<0$ at the expense of making $\lambda_i(c_i)$ positive and larger. The same value $T_J^i$ of the corresponding direct or reverse constraint of Assumption~\ref{assumption32} is considered for each case in this example, thus the segments have the same slope. 
}
\end{figure}
Lemma~\ref{lemma:r_i_SGUAS} is used 
in %the proof of 
Theorem~\ref{thm:strong_GUES}, which 
presents conditions that ensure S-GUES for a given family of systems. 
\begin{theorem}
\label{thm:strong_GUES}
Suppose that the family of switched impulsive systems $\Sigma_\Lambda$ satisfies assumptions~\ref{ass031},  \ref{assumption32} and
\ref{ass_times}. Select $L\in \N$ and compute $R(L)$ and $\hat R$ as in \eqref{Rell} and \eqref{hatR}, respectively. Suppose Assumption~\ref{ass:jumps}~a) holds if $R(L)\ge 1$ and that Assumption~\ref{ass:jumps}~b) holds if $R(L)< 1$. 
Select $c_i \geq 0$ such that $r_i(c_i)<0$ for all $i\in \I$ such that $\bar r(i,i) \neq 1$
and $c_s \geq 0$ such that $r_s(c_s,L)<0$.
Compute $T_S$, $N_S$ as in \eqref{N0TSnus} and %and $\lambda_s=\lambda_s(c_s,L)$.
define
\begin{align}
C&:= 
\sum_{j=1}^{N^\I}   T_a^j \max_{i\in \I_j} \{\lambda_i(c_i)\}
+
\sum_{i=1}^{ N}   c_i N_0^i  \ln (\bar r(i,i)) + N_S\frac{\ln(R(L))}{L}+ \ln( \hat R(L) ),  \label{eq:C_thmS_GUES}\\
\lambda_J&:= \sum_{j=1}^{N^\I} N_a^j \max_{i\in \I_j} \{\lambda_i(c_i)\},
\quad r_J:= \max \{r_i(c_i) : \bar r(i,i) \neq 1\},
\label{eq:lambda_mu-r_mu}\\
\lambda_0 &:=-\max \left \{\lambda_J +\lambda_s(c_s,L),r_s(c_s,L),r_J\right\}. \label{eq:lam_0}
\end{align}
If $\lambda_0>0$, then  the family 
is S-GUES with $K$ and $\lambda$ as in \eqref{eq:K_lamb2}.
\end{theorem}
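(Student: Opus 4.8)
The plan is to reduce the claim to a single application of Theorem~\ref{thm32_strong2}: I will verify that the hypotheses of the present theorem imply the key inequality \eqref{H3} with the choice $C_0 := \sum_{j=1}^{N^\I} T_a^j \max_{i\in\I_j}\{\lambda_i(c_i)\}$ and the $\lambda_0$ of \eqref{eq:lam_0}. Writing $M_j := \max_{i\in\I_j}\{\lambda_i(c_i)\}$, this choice makes $C = C_0 + C_1$ coincide with the constant $C$ defined in \eqref{eq:C_thmS_GUES} (with $C_1$ as in \eqref{H2_S}), so once \eqref{H3} is in hand the stated conclusion (S-GUES with $K,\lambda$ as in \eqref{eq:K_lamb2}) follows at once. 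The entire task is therefore to bound, for every $(\nu,\mu)\in\Lambda$, $t_0\ge 0$ and $t\ge t_0$, the left-hand side of \eqref{H3}, namely $\lambda_s(c_s,L)(t-t_0)+r_s(c_s,L)\,n_\nu(t,t_0) + \sum_{i\in\I}[\lambda_i(c_i)\,t_a(i)+r_i(c_i)\,n(i)]$, by $C_0-\lambda_0(t-t_0+n_\mu+n_\nu)$; here I abbreviate $t_a(i)=t_a(i,t,t_0)$ and $n(i)=n(i,t,t_0)$ as in the proof of Theorem~\ref{thm32_strong2}.

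First I would treat the activation term $\sum_{i\in\I}\lambda_i(c_i)\,t_a(i)$. By Lemma~\ref{lemma:r_i_SGUAS}, the chosen $c_i$ (those making $r_i(c_i)<0$) force $\lambda_i(c_i)<0$ on $\I^s$ and $\lambda_i(c_i)>0$ on $\I^u$; for a neutral mode $\bar r(i,i)=1$ one has $\lambda_i(c_i)=\bar\lambda(i)=\lambda_i(1)$, so its sign is again fixed by membership in $\I^u$ or $\I^s$. Hence each group maximum has a definite sign: $M_j\ge 0$ when $\I_j\subset\I^u$ and $M_j\le 0$ when $\I_j\subset\I^s$. Since $t_a(i)\ge 0$ and $\lambda_i(c_i)\le M_j$, within each group $\sum_{i\in\I_j}\lambda_i(c_i)\,t_a(i)\le M_j\sum_{i\in\I_j}t_a(i)$. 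I then insert the group activation-time constraints of Assumption~\ref{ass_times}: the direct bound \eqref{cond_time} for $\I_j\subset\I^u$ (where $M_j\ge 0$), and the reverse bound \eqref{cond_time2} for $\I_j\subset\I^s$ (where $M_j\le 0$ makes multiplication flip the inequality into the desired upper bound). Summing over $j$ and using the definition of $\lambda_J$ in \eqref{eq:lambda_mu-r_mu} yields $\sum_{i\in\I}\lambda_i(c_i)\,t_a(i)\le \sum_j T_a^j M_j + \lambda_J(t-t_0) = C_0 + \lambda_J(t-t_0)$.

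Next I would dispose of the switching and impulse terms. Because $\lambda_0>0$, the definition \eqref{eq:lam_0} gives simultaneously $\lambda_J+\lambda_s(c_s,L)\le -\lambda_0$, $r_s(c_s,L)\le -\lambda_0$ and $r_J\le -\lambda_0$. From $n_\nu\ge 0$ and $r_s(c_s,L)\le -\lambda_0$ I get $r_s(c_s,L)\,n_\nu\le -\lambda_0 n_\nu$; from $r_i(c_i)\le r_J\le -\lambda_0$ together with $n(i)\ge 0$ and $\sum_{i\in\I}n(i)=n_\mu$ I get $\sum_{i\in\I}r_i(c_i)\,n(i)\le -\lambda_0 n_\mu$. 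Combining with the previous step, the coefficient of $(t-t_0)$ is $\lambda_s(c_s,L)+\lambda_J\le -\lambda_0$, so the left-hand side of \eqref{H3} is bounded by $C_0-\lambda_0(t-t_0)-\lambda_0 n_\mu-\lambda_0 n_\nu$. This is precisely \eqref{H3}, and Theorem~\ref{thm32_strong2} then delivers S-GUES with the stated $K$ and $\lambda$.

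The delicate points are the sign bookkeeping in the activation step and, above all, the neutral modes. For $\I_j\subset\I^s$ one multiplies the reverse inequality \eqref{cond_time2} by the nonpositive $M_j$; verifying that this reversal is consistent with the sign conditions on $N_a^j$ and $T_a^j$ imposed in \eqref{eq:quantities} is where care is required. The hard part, however, is a mode with $\bar r(i,i)=1$: there $r_i(c_i)=0$, so the impulse term contributes no decay and the bound $\sum_{i\in\I}r_i(c_i)\,n(i)\le -\lambda_0 n_\mu$ fails to account for the share $\sum_{\bar r(i,i)=1}n(i)$ of $n_\mu$. To recover the required $-\lambda_0$ decay for those impulses I would invoke the upper bound \eqref{n_imp>1} of Assumption~\ref{assumption32} (available since $\bar r(i,i)=1\ge 1$), namely $n(i)\le N_0^i + t_a(i)/T_J^i$, thereby trading the neutral impulse count for activation time at the cost of the constant $\lambda_0 N_0^i$ and an extra coefficient $\lambda_0/T_J^i$ on $t_a(i)$; reconciling this extra coefficient with the group maxima $M_j$ (or restricting neutral modes to $T_J^i=\infty$, so that only finitely many such impulses occur) is the technical crux of the argument.
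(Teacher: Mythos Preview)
Your argument is essentially the paper's: both reduce to Theorem~\ref{thm32_strong2} with $C_0=\sum_j T_a^j M_j$, bound $\sum_i \lambda_i(c_i)\,t_a(i)$ groupwise via Assumption~\ref{ass_times} to get $C_0+\lambda_J(t-t_0)$, bound $\sum_i r_i(c_i)\,n(i)$ by $r_J\,n_\mu$, and then use $\lambda_J+\lambda_s\le-\lambda_0$, $r_s\le-\lambda_0$, $r_J\le-\lambda_0$ to conclude. The paper is, if anything, terser than you on the sign bookkeeping in the activation step: it simply writes the groupwise bound without pausing to justify $M_j\ge 0$ on $\I^u$ and $M_j\le 0$ on $\I^s$.

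On the neutral-mode issue you flag: the paper's proof does \emph{not} address it. It writes $\sum_i r_i(c_i)\,n(i)\le \max\{r_i(c_i):r_i(c_i)\neq 0\}\sum_i n(i)=r_J\,n_\mu$ and moves on, which---exactly as you observe---tacitly assumes that modes with $\bar r(i,i)=1$ contribute no impulses (or are absent). So your worry is legitimate, but it is a gap shared by the paper itself rather than a defect peculiar to your argument; you need not supply the extra trade-off via \eqref{n_imp>1} to match the paper's level of rigor.
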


\begin{proof}
Assumptions of 
Theorem~\ref{thm32_strong2} hold.
Rename $t_a(i):=t_a(i,t,t_0)$, $n(i):=n(i,t,t_0)$, $n_\mu:=n_\mu(t,t_0)$, $n_\nu:=n_\nu(t,t_0)$
and $N^\I=N^u+N^s+N^0$ for the sake of simplicity.
By Assumption~\ref{ass_times} we can bound 
\begin{align*} 
&\sum_{i=1}^{ N}  r_i(c_i) n(i) + \lambda_i(c_i) t_a(i) \leq \max \{r_i(c_i): r_i(c_i)\neq 0 \}  \sum_{i=1}^{N} n(i) \notag + \sum_{j=1}^{N^\I} \max_{i\in \I_j} \{\lambda_i(c_i)\}  (T_a^j+N_a^j(t-t_0)) \es
&\leq r_J n_\mu + \sum_{j=1}^{N^\I} T_a^j \max_{i\in \I_j} \{\lambda_i(c_i)\}  + (t-t_0)  
 \sum_{j=1}^{N^\I} N_a^j \max_{i\in \I_j} \{\lambda_i(c_i)\}  
= r_J n_\mu + \sum_{j=1}^{N^\I} T_a^j \max_{i\in \I_j} \{\lambda_i(c_i)\}  +  
 \lambda_J (t-t_0).
\end{align*}
Define $C_0:=\sum_{j=1}^{N^\I} T_a^j \max_{i\in \I_j} \{\lambda_i(c_i)\}  $
Thus, the term in the left of \eqref{H3} can be bounded by
\begin{align}
 %\sum_{j=1}^{N^\I} T_a^j \max_{i\in \I_j} \{\lambda_i(c_i)\}
 C_0 
  +  \lambda_J (t-t_0)  + \lambda_s( t-t_0) + r_J n_\mu + r_s n_\nu 
 &\leq  C_0 + (\lambda_J +\lambda_s) (t-t_0)  + \max\{ r_J,r_s \}( n_\mu+ n_\nu)  \es
 & \leq C_0 - \lambda_0 (t-t_0+ n_\mu+ n_\nu). \notag
\end{align}
\vspace{-0.15cm}
\end{proof}
There are various degrees of freedom at our disposal that provide flexibility to ensure %W-GUES or
S-GUES under a wide range of scenarios.  
For the switching part, we need to determine the 
mode sequence length
$L$ and the values of $\bar N_0$ and $\bar T_S$, or alternatively, $\munderbar{N}_0$ and $\munderbar{T}_S$, which determine the bounds on the number of switchings, as specified in Assumption~\ref{ass:jumps}.
For each individual subsystem, the values $N_0^i$ and $T_J^i$ are used to bound the impulsive jumps of each subsystem as defined in Assumption~\ref{assumption32} and $N_a^j$ and $T_a^j$ to bound the activation time of groups of subsystems (or individual subsystems) as presented in Assumption~\ref{ass_times}.
Furthermore, once we have established these bounds on the switchings and impulse jumps of each subsystem, we can choose suitable values of $c_s$ and $c_i$ for all $i\in \I$ to balance the contributions of the continuous and impulsive parts of each subsystem. 

For systems where impulsive behaviour only occurs at switching times, i.e. $\mu=\emptyset$ and $n_\mu(t,t_0) \equiv 0$, there exist simpler conditions based on Corollary~\ref{thm32_strong_simpl1}, which we present below.

\begin{theorem}
    %[Of Corollary~\ref{thm32_strong_simpl1}]
\label{cor:SGUES_simp}
Suppose that the family $\Sigma_\Lambda$ satisfies  
$n_\mu(t,t_0) \equiv 0$ and
assumptions~%\ref{ass00},
\ref{ass031} and \ref{ass_times}. %.~\ref{ass39_itema}) and \ref{ass_times}~\ref{ass39_itemb}).
Select $L\in \N$
and compute $R(L)$ and $\hat R$ as in \eqref{Rell} and \eqref{hatR}, respectively.
Suppose that Assumption \ref{ass:jumps}~a) holds if $R(L)\ge 1$ and that Assumption \ref{ass:jumps}~b) holds if $R(L)< 1$. Compute $T_S$ and $N_S$ as in \eqref{N0TSnus} and define
\begin{align}
 C 
&:=
  \sum_{j=1}^{N^\I} T_a^j \max_{i\in \I_j} \{\bar \lambda(i)\}
  + N_S\frac{\ln(R(L))}{L}+ \ln( \hat R(L) ), \label{H2-simp}\\
   \lambda_0 &:=  -\max \left \{
  \sum_{j=1}^{N^\I} N_a^j \max_{i\in \I_j} \{\bar \lambda(i)\}
  + c_s \frac{\ln(R(L))}{T_S L},  (1-c_s) \frac{\ln(R(L))}{L}\right \}.\label{H1-simp}
\end{align}
If there exist $c_s \in [0,1)$ such that $\lambda_0>0$, then  the family 
is S-GUES with $K$ and $\lambda$ as in \eqref{eq:K_lamb2}.
\end{theorem}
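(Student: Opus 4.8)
The plan is to reduce Theorem~\ref{cor:SGUES_simp} to Corollary~\ref{thm32_strong_simpl1} by choosing $c_i=0$ for every $i\in\I$ and then discharging the hypothesis \eqref{H3_simp} using the activation-time bounds of Assumption~\ref{ass_times}. First I would observe that $n_\mu(t,t_0)\equiv 0$ forces $n(i,t,t_0)\equiv 0$ for all $i\in\I$; hence there are no within-mode impulses, so Assumption~\ref{assumption32} holds trivially with $T_J^i=\infty$ (and any admissible $N_0^i$). With $T_J^i=\infty$ the convention $s/\infty=0$ gives $\lambda_i(1)=\bar\lambda(i)$ in \eqref{eq:alf_i}, so the partition $\I=\I^u\cup\I^s$ of \eqref{eq:sets_times} underlying Assumption~\ref{ass_times} coincides with the split by the sign of $\bar\lambda(i)$: every group $\I_j\subset\I^u$ has $\max_{i\in\I_j}\{\bar\lambda(i)\}\geq 0$ and every group $\I_j\subset\I^s$ has $\max_{i\in\I_j}\{\bar\lambda(i)\}<0$. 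Taking $c_i=0$ also yields $\lambda_i(0)=\bar\lambda(i)$, so the conclusion of Corollary~\ref{thm32_strong_simpl1} becomes available once \eqref{H3_simp} is verified.

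The key step is to bound the activation-time sum in the left-hand side of \eqref{H3_simp}. I would write $\sum_{i=1}^N\bar\lambda(i)t_a(i,t,t_0)=\sum_{j=1}^{N^\I}\sum_{i\in\I_j}\bar\lambda(i)t_a(i,t,t_0)\le\sum_{j=1}^{N^\I}\max_{i\in\I_j}\{\bar\lambda(i)\}\sum_{i\in\I_j}t_a(i,t,t_0)$, which holds because each $t_a(i,t,t_0)\ge 0$. The delicate point is to apply Assumption~\ref{ass_times} with the correct inequality direction: for groups in $\I^u$ the coefficient $\max_{i\in\I_j}\{\bar\lambda(i)\}$ is nonnegative and we use the upper bound \eqref{cond_time}, whereas for groups in $\I^s$ the coefficient is negative, so multiplying the lower bound \eqref{cond_time2} flips the inequality into the desired direction. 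In both cases one obtains $\max_{i\in\I_j}\{\bar\lambda(i)\}\sum_{i\in\I_j}t_a(i,t,t_0)\le\max_{i\in\I_j}\{\bar\lambda(i)\}(T_a^j+N_a^j(t-t_0))$. Summing over $j$ and adding the switching terms $\lambda_s(c_s,L)(t-t_0)+r_s(c_s,L)n_\nu(t,t_0)$ from \eqref{eq:alf_nu}, I would collect the constant part into $C_0:=\sum_{j=1}^{N^\I}T_a^j\max_{i\in\I_j}\{\bar\lambda(i)\}$, the coefficient of $(t-t_0)$ into $\sum_{j}N_a^j\max_{i\in\I_j}\{\bar\lambda(i)\}+c_s\frac{\ln(R(L))}{T_S L}$, and the coefficient of $n_\nu(t,t_0)$ into $(1-c_s)\frac{\ln(R(L))}{L}$.

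Finally, with $\lambda_0$ defined as in \eqref{H1-simp} the hypothesis $\lambda_0>0$ says precisely that both of these coefficients are $\le -\lambda_0<0$, so the resulting bound reads $C_0-\lambda_0(t-t_0)-\lambda_0 n_\nu(t,t_0)=C_0-\lambda_0(t-t_0+n_\nu(t,t_0))$, which is exactly \eqref{H3_simp}. Corollary~\ref{thm32_strong_simpl1} then gives S-GUES with $C=C_0+C_1$, where $C_1=N_S\frac{\ln(R(L))}{L}+\ln(\hat R)$; this sum coincides with the constant $C$ in \eqref{H2-simp}, and $\lambda$ is as in \eqref{eq:K_lamb2}. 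I expect the main obstacle to be the bookkeeping of the inequality directions in the key step: one must ensure that the grouping of Assumption~\ref{ass_times} respects the sign classification, i.e. that within each group $\I_j$ all modes are simultaneously (non)stabilizing so that a single direction of \eqref{cond_time}/\eqref{cond_time2} applies, and that the identification $\lambda_i(1)=\bar\lambda(i)$ forced by $T_J^i=\infty$ legitimizes using $\max_{i\in\I_j}\{\bar\lambda(i)\}$ rather than $\max_{i\in\I_j}\{\lambda_i(c_i)\}$ as in the proof of Theorem~\ref{thm:strong_GUES}.
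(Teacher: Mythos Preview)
Your proof is correct and follows essentially the same route as the paper: reduce to Corollary~\ref{thm32_strong_simpl1} (with $c_i=0$), then bound $\sum_i\bar\lambda(i)\,t_a(i,t,t_0)$ groupwise via Assumption~\ref{ass_times} to obtain \eqref{H3_simp} with $C_0=\sum_j T_a^j\max_{i\in\I_j}\{\bar\lambda(i)\}$. Your treatment is in fact more careful than the paper's, since you explicitly justify the trivial satisfaction of Assumption~\ref{assumption32} (implicitly needed for Corollary~\ref{thm32_strong_simpl1}) together with the identification $\lambda_i(1)=\bar\lambda(i)$ via $T_J^i=\infty$, and you spell out the sign bookkeeping for the two directions of \eqref{cond_time_weak}; the paper's proof leaves these points implicit.
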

\begin{proof}
Note that assumptions of Corollary~\ref{thm32_strong_simpl1} hold. 
We have $\hat \lambda(i)=\bar \lambda(i)$ for all $i\in\I$.
From \eqref{H3_simp}, using Assumption~\ref{ass_times} we have
\begin{align*}
&\sum_{i=1}^{ N} \bar \lambda(i) t_a(i,t,t_0) 
+ \lambda_s(c_s,L) (t-t_0)+  r_s(c_s,L) n_\nu(t,t_0)   \es
&< \sum_{j=1}^{N^\I} T_a^j \max_{i\in \I_j} \{\bar \lambda(i)\} + \lp  \sum_{j=1}^{N^\I} N_a^j \max_{i\in \I_j} \{\bar \lambda(i)\} 
+ \lambda_s(c_v,L) \rp (t-t_0) +r_s(c_s,L)  n_\nu(t,t_0) \es
&< \sum_{j=1}^{N^\I} T_a^j \max_{i\in \I_j} \{\bar \lambda(i)\} + \max \left \{  \sum_{j=1}^{N^\I} N_a^j \max_{i\in \I_j} \{\bar \lambda(i)\} 
+ \lambda_s(c_v,L),r_s(c_s,L) \right \} ((t-t_0) + n_\nu(t,t_0)) \es
&=\sum_{j=1}^{N^\I} T_a^j \max_{i\in \I_j} \{\bar \lambda(i)\}-
\lambda_0 ((t-t_0)+n_\nu(t,t_0)). 
\end{align*} 
\end{proof}
For certain cases, it is possible to achieve S-GUES even if the continuous-time part of the subsystems has a destabilizing effect
determined by their activation time. If the value %
$   \sum_{j=1}^{N^\I} N_a^j \max_{i\in \I_j} \{\bar \lambda_i\} $
in \eqref{H1-simp}
is positive and the switching between subsystems is stabilizing, we can ensure S-GUES if
the overall sum in the first term of \eqref{H1-simp} can be made negative by appropriately selecting the values of $c_s$ and $T_S$. 

% EXAMPLES ==============================
\section{Examples}
\label{sec:example}

\subsection{S-GUES family with an unstable subsystem}
We provide a numerical example to illustrate the applicability of our results. This example 
shows that increasing the switching sequence length $L$ in the computation of the bound allows us to prove strong stability. Additionally, we show that by adequately selecting the coefficients $c_s$ and $c_i$ for each mode, it is possible to obtain tighter bounds.
Consider a family 
\ajv{of switched impulsive systems of the form 
\eqref{impulsive_new} with linear flow and jump maps $f_i(x)=A_i x$ and $g_{i,j}(x)=J_{i,j}x$,}
with $N=2$ subsystems
 \blue{having the jump graph in Figure~\ref{fig:WDG_example3}} 
and
\begin{equation}
\label{eq:matrices}
A_1=
\begin{bmatrix}
-1.4 & 0.6 \\
-0.5 & -0.3 
\end{bmatrix},  \quad  
A_2=
\begin{bmatrix}
4 & 3 \\
-1 & 2 
\end{bmatrix}, \quad
\blue{J_{2,1}}=J_{1,1}=
\begin{bmatrix}
0.105  &       0 \\
0 &    0.11
\end{bmatrix}, \quad
\blue{J_{1,2}}=J_{2,2}=
\begin{bmatrix}
1.26 & 0 \\
0 & 1.26
\end{bmatrix}.
\end{equation}
\iffalse
\begin{equation}
\label{eq:matrices}
\begin{split}
A_1&=
\begin{bmatrix}
-1.4 & 0.6 \\
-0.5 & -0.3 
\end{bmatrix}  \qquad \qquad \phantom{|||}
A_2=
\begin{bmatrix}
4 & 3 \\
-1 & 2 
\end{bmatrix}, \\
\blue{J_{2,1}}&=J_{1,1}=
\begin{bmatrix}
0.105  &       0 \\
0 &    0.11
\end{bmatrix} \phantom{|||||||||}
\blue{J_{1,2}}=J_{2,2}=
\begin{bmatrix}
1.26 & 0 \\
0 & 1.26
\end{bmatrix}.
\end{split}
\end{equation}
\fi
\vspace{-0.5cm}
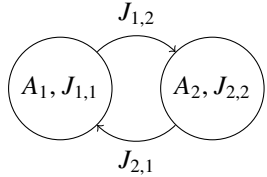
\begin{figure}[ht!]
\begin{center}
\begin{tikzpicture}%[scale=0.8]
  % Node definition
  \node[draw, circle] (1) at (0,0) {$A_1,J_{1,1}$};
  \node[draw, circle] (2) at (2,0) {$A_2,J_{2,2}$};
  % Arcs definition
  \draw[->] (1) edge[bend right=-45] node[above] {$J_{1,2}$} (2);
  \draw[->] (2) edge[bend right=-45] node[below] {$J_{2,1}$} (1);
\end{tikzpicture}
\end{center}
\vspace{-0.5cm}
\caption{The jump graph of possible jumps for the system.}
\label{fig:WDG_example3}
\end{figure}

Consider that the constraints in assumptions~\ref{assumption32} and~\ref{ass:jumps} 
hold with
\begin{align}
\label{Quantities}
\begin{split}
&\bar N_0=1, \phantom{11||||} \munderbar{N}_0=-1, \phantom{11}  N_0^1=-1, \phantom{1111}  N_0^2=1,  \\
&\bar T_S=0.1, \phantom{11} \munderbar{T}_S=0.1, \phantom{11}  T_J^1=0.085,\phantom{11}   T_J^2=0.024.
\end{split}
\end{align}
Note that matrices $A_1$ and $J_{1,1}$ are Hurwitz and Schur stable respectively, while
$A_2$ and $J_{2,2}$ are not.
This implies that even if all conditions of assumptions~\ref{assumption32} and~\ref{ass:jumps} are satisfied, the unstable subsystem given by $(A_2,J_{2,2})$ can still be active for most of the time. 
Moreover, there exists a destabilizing impulse during the jump from subsystem $1$ to $2$.
Let $\Sigma_{\Lambda}$ be the family of systems  
which additionally satisfies the following activation time conditions
\begin{subequations}
\label{eq:c}
\begin{eqnarray}
&t_a(1,t,t_0) \geq T_{a}^1+N_a^1(t-t_0), \label{tiempo1} \\
&t_a(2,t,t_0) \leq T_{a}^2+N_a^2(t-t_0), \label{tiempo2}
\end{eqnarray}
\end{subequations}
with
$N_{a}^2=0.56$,
$T_{a}^2=0.03$, $N_a^1=(1-N_a^2)=0.44$ and $T_a^1=-T^2_a=-0.03$.

To prove that the family $\Sigma_{ \Lambda}$
is S-GUES, we will compute the necessary constants constants $ K$ and $\lambda$ outlined in  
Theorem~\ref{thm:strong_GUES}.
\ajv{For the computation of the functions $V_i$ in Assumption~\ref{ass031}, we will follow the results outlined in Appendix~\ref{app:boundingLSI}.
}
First, define $\I:=\{1,2\}$ and the index sets $\I_d=\{1\}$, $\I_u=\{2\}$.
Select $Q_1:=I$ and $P_2=I$,
and apply Lemma~\ref{property3} to obtain
the following quantities specified by  \eqref{lyap4_c}, \eqref{tildeQ}, \eqref{prop2_2} and \eqref{prop2_3} 
\begin{align*}
&P_1=
\begin{bmatrix}
1.0111  &  0 \\
0  &  1.0122
\end{bmatrix}, \quad
\tilde Q_1=
\begin{bmatrix}
2.8312  &      -0.1006 \\
-0.1006   & 0.6073
\end{bmatrix}, \quad 
\tilde Q_2=
\begin{bmatrix}
-8  &      -2 \\
-2   & -4
\end{bmatrix}, \quad 
\bar R=
\begin{bmatrix}
 0.0121 %0.0110
  &  1.5701 \\
0.0122  &  1.5876
\end{bmatrix}, \notag \\
&\bar \lambda(1)=-0.5955,  \medspace \medspace  \bar \lambda(2)=8.8284.
\end{align*}
Thus, Assumption~\ref{ass031} is satisfied 
with the previous values of $\bar \lambda(1)$, $\bar \lambda(2)$, $\bar r(i,j) = \bar R_{i,j}$ and $V_i(x)=x^T P_i x$ for all $i\in \I$.
By computing %
$ \lambda_i(1):= \bar \lambda(i)+\frac{\ln(\bar r(i,i))}{T_J^i}$ for each mode, we obtain 
$ \lambda_1(1) = -52.5314$
and  $\lambda_2(1)=   28.0877  
$.
The signs of $\lambda_1(1)$ and $\lambda_2(1)$ and the constraints \eqref{eq:c} imply that Assumption~\ref{ass_times} is satisfied with $\I^u=\I_1=\{2\}$ and $\I^s=\I_2=\{1\}$.  As a result, all the conditions %----
of Theorem~\ref{thm:strong_GUES} are met.

To apply Theorem~\ref{thm:strong_GUES},
note that $\ln (\bar r(1,1))<0$, $\bar\lambda(1)\leq 0$ and
$\ln (\bar r(2,2))>0$, $\bar\lambda(2)\geq 0$ then,
according to Table~\ref{table5}, we can choose
 $c_1 \in (0,1)$ and  $c_2 \in (1,\infty)$ 
such that $r_1(c_1)<0$ and $r_2(c_2)<0$.
Select $c_1:=0.8$ and $c_2:=2.3$, to obtain $r_J=-0.6009$ and $\lambda_J=11.2065$
from \eqref{eq:lambda_mu-r_mu}.
To show the improvement of the bound, we will 
use different 
lengths $L \in \{1,2,3\}$ for the mode sequences.
Select $c_s=0.6$, by which we obtain the values displayed in Table~\ref{table10},
\begin{table}[h]
    \centering
    \begin{tabular}{|c|c|c|c|c|}
    \hline
     $L$  & 1 & 2 & 3 \\
     \hline
        $R(L)$ &
    1.5701   & 0.0192    &0.0302  \\ \hline
      $\lambda_s(0,L)$ &  2.7068  &-11.8537    &-7.0002  \\ \hline
    $r_s(0,L)$ & 0.1805  &-0.7902    &-0.4667
\\ \hline
    $\lambda(L)$ & -6.9566   & 0.3004  &  -2.1031
\\ \hline
    $K(L)$  &47.839 & 682.917  &427.5748
    \\ \hline
    \end{tabular}
    \caption{Values of $R$, $C$,  $K$ and $-\lambda$  for $L \in \{1,2,3\}$.}
    \label{table10}
\end{table}
where $C(L)$ and $\lambda_0(L)$ in \eqref{eq:C_thmS_GUES} and \eqref{eq:lam_0}, define the bound
\begin{align}
|x(t)|\leq \beta_L(|x_0|,t-t_0+n(t,t_0)):= K(L) e^{-\lambda(L) (t-t_0+n(t,t_0))} |x_0| \quad \forall t\geq t_0.
\label{functionsL}
\end{align}
with unique $K=\exp \lp\frac{C}{2} \rp \lp\frac{\max_{i\in \I}\lambda_M(P_{i})}{\min_{i\in\I}\lambda_m(P_{i})} \rp^{1/2}$  and $\lambda= \lambda_0/2$ as in \eqref{eq:K_lamb2}.
For $L=1$, we obtain $R(1)\geq 1$,
which leads to a positive value of $-\lambda(1)$ and a divergent exponential bound. 
However, for $L=2$,
we obtain a negative value for $-\lambda(2)$, %and $-\lambda(3)$
and hence, Theorem~\ref{thm:strong_GUES}
ensures S-GUES of $\Sigma_{  \Lambda}$ without any modifications to the $V_i$ functions used. 
For 
$L=3$, although the bound is again divergent, it is better than for 
$L=1$, as per Lemma~\ref{lemma:L}. 

We next simulate a system from the presented S-GUES family $\Sigma_{ \Lambda}$
using various initial conditions satisfying $|x_0|=1$ and $t_0=0$.
For the current simulation, the
required constraints %\eqref{n_nu} and \eqref{ass:r<>1} in
in assumptions~\ref{assumption32} and~\ref{ass:jumps} and the activation-time constraints \eqref{eq:c} 
in Assumption~\ref{ass_times} are satisfied, as
shown in Figure~\ref{figure1}
and Figure~\ref{fig:times}, respectively.
Figure~\ref{figure2_threebounds} shows the three bounds corresponding to the different set of values of $L$, $c_1$, $c_2$ and $c_s$. Specifically,  the third bound in red solid line corresponds to the S-GUES bound specified by the values of Table~\ref{table10} with L=2.
Note that the bound that results in the minimum maximum overshoot 
$K$ (in black dashed line) diverges, while the other two  ensure stability. Furthermore, the one in blue dotted line exhibits a slow decay rate compared to the one in red solid line.
By combining these bounds we guarantee %S-GUAS
\ajv{strong global uniform asymptotic stability}
for $\Sigma_{ \Lambda}$, with a bound of the form  $|x(t)|\leq\beta(|x_0|, t-t_0+n(t,t_0))$, 
with a function $\beta \in \KL$ defined as the minimum of all functions defined as in \eqref{functionsL} for the three set of constants $L$, $c_1$, $c_2$ and $c_s$ used; i.e. $\beta:= \min\{\beta_1,\beta_2,\beta_3\}$. 
As a result, this approach allows us to benefit from both a low maximum overshoot and a fast decay rate.
\vspace{-0.2cm}
\begin{figure}[ht!]
\centering
\includegraphics[width=0.5\columnwidth]{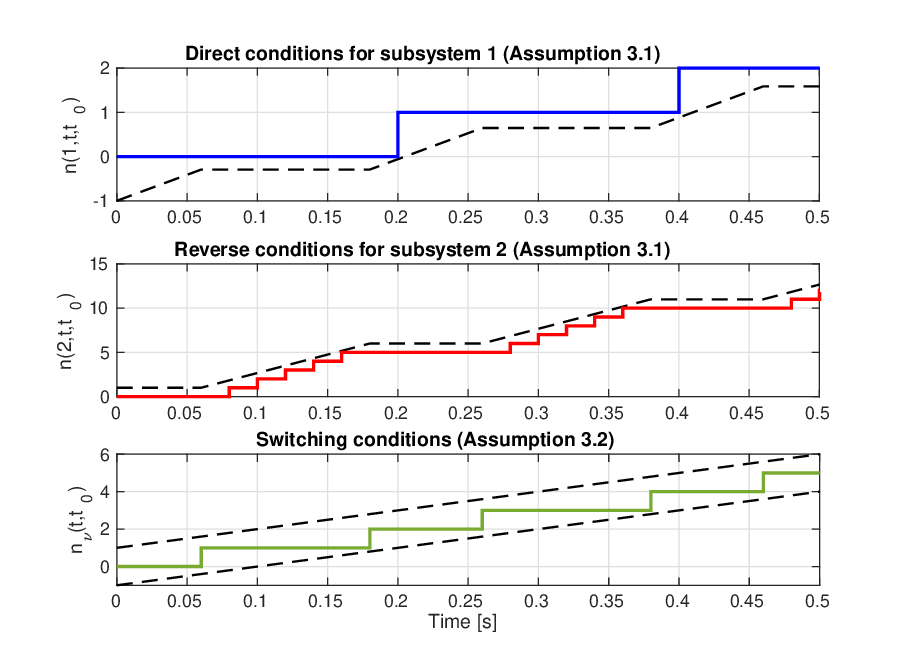}
\vspace{-0.5cm}
\caption{ %
Bounds from assumptions~\ref{assumption32}  and~\ref{ass:jumps} for the current simulation with $t_0=0$. 
Top plot: the number of switchings $n_\nu(t,t_0)$ is bounded from above or below according to the value of $R(L)$. Middle and bottom: the impulsive jumps of subsystems $1$ and $2$ are upper and lower bounded according to \eqref{n_imp>1} and \eqref{n_imp<1}, respectively. All plots: bounds are plotted as dashed lines.}
\label{figure1}
\end{figure}
\begin{figure}[ht!]
\centering
\includegraphics[width=0.5\columnwidth]{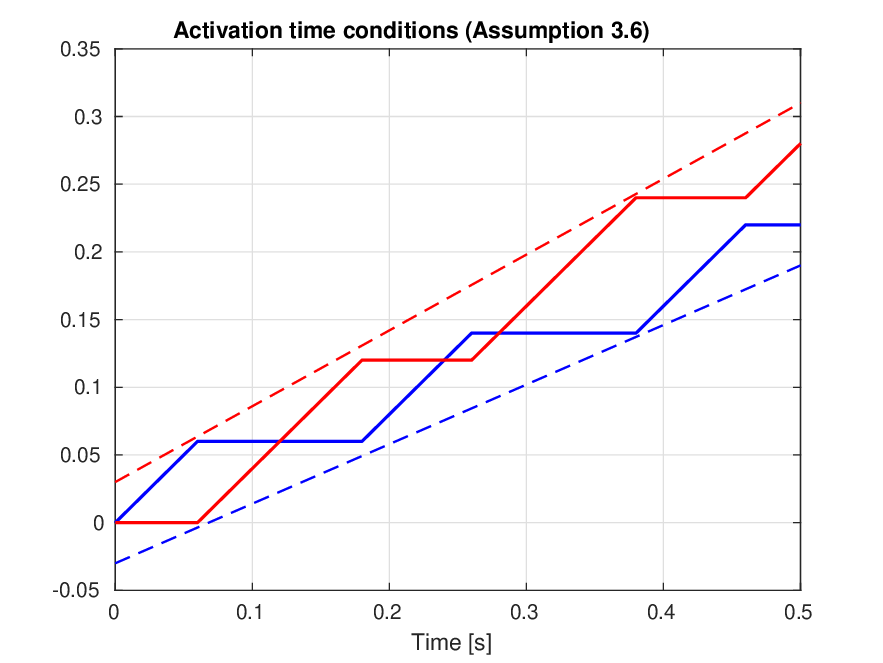}
\vspace{-0.3cm}
\caption{Activation times (solid line) and activation time bounds (dashed line) of Assumption~\ref{ass_times} for the current simulation. 
Subsystem $1$ begins active, with $t_a(1,t,t_0)$ and the lower bound \eqref{tiempo1} in blue and $t_a(2,t,t_0)$ and the upper bound \eqref{tiempo2} in red.}
\label{fig:times}
\end{figure}

\begin{figure}[ht!]
\centering
\includegraphics[width=0.5\columnwidth]{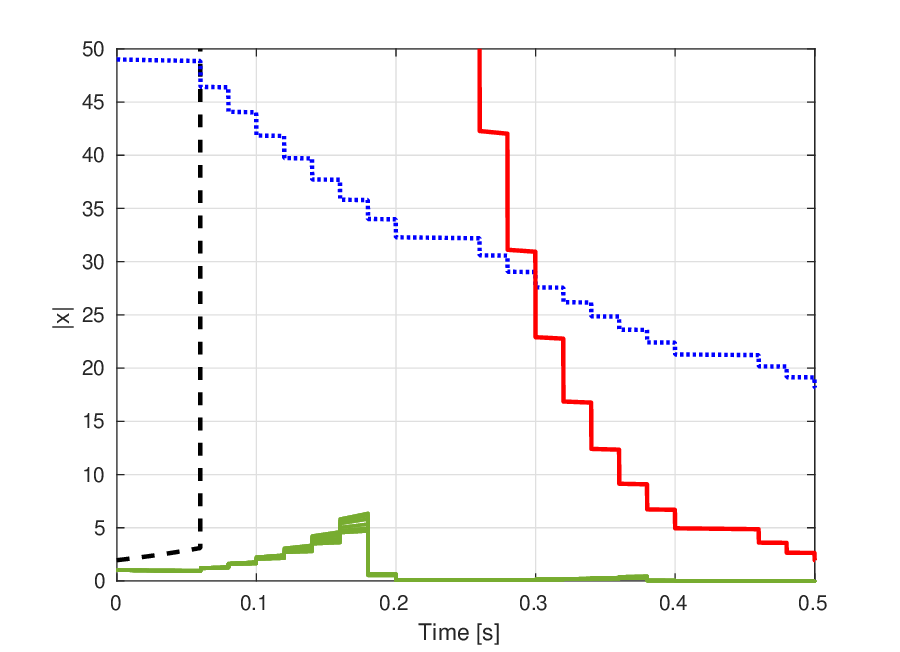}
\vspace{-0.3cm}
\caption{State norm evolution (green) for some systems of $\Sigma_{ \Lambda}$ with $|x_0|=1$. Bounds obtained with $L=1,c_1=0,c_2=1,c_s=0$ (black dashed line), $L=2,c_1=0,c_2=1.22,c_s=0.92$ (blue dotted line) and $L=2,c_1=0.8,c_2=2.3,c_s=0.6$ (red solid line). }
\label{figure2_threebounds}
\end{figure}

\subsection{Application to perturbed systems}
We will show how the presented results can be applied to verify 
input-to-state stability properties for a nonlinear perturbed version of the preceding system. Consider a family $\hat \Sigma$ with dynamics given by
\begin{align}
\label{impulsive_example}
\begin{split}
&\dot x(t)=A_{\nu(t)} x +  \phi_{\nu(t)}(t,x,u), \quad t \notin  \sigma,  \\
&x(t)=J_{\nu(t-),\nu(t)}x(t^-), \phantom{lllll|||||}  \quad t \in  \sigma,
\end{split}
\end{align}
where $x(t)\in \R^2$ and $u$ is a locally bounded and Lebesgue measurable real-valued (i.e. scalar) input function.
$A_1$, $A_2$, $J_{1,2}$ and $J_{2,1}$ are defined 
in \eqref{eq:matrices}
and 
\begin{equation}
\label{eq:nonlinear_part}
\phi_1(t,x,u):=0.01 (1+\sin(t)) |x| + (2|x|+5)u^2, \qquad \phi_2(t,x,u):=0.005(1+\cos(t)) |x| + u.
\end{equation}

The jump graph corresponds to the one shown in Figure~\ref{fig:WDG_example3}, except that for this case $J_{1,1}=J_{2,2}=I$.
We assume that the impulsive behaviour occurs at switching times, thus $\mu=\emptyset$ and $n_\mu(t,t_0) \equiv 0$.
As \eqref{impulsive_example} 
admits a global state weak linearization \citep[Definition II.1]{mancilla2021integral}, we can use the results presented in
 \cite[Theorem III.1]{mancilla2021integral} to establish that it is Strongly integral Input-to-State Stable (S-iISS).
In order to use the mentioned theorem,
we first need to compute the constants $K$ and $\lambda$ that define the S-GUES bound of the linear part of \eqref{impulsive_example}, denoted as $\hat \Sigma^{lin}$.
We assume that the constraints in assumptions~\ref{assumption32},~\ref{ass:jumps} and~\ref{ass_times} are satisfied with the same values presented in the previous example.

Note that neither $J_{1,1}$ nor $J_{2,2}$ is Schur.
\ajv{Following Appendix~\ref{app:boundingLSI}, select} $\I_c=1$, $\I_u=2$, $Q_1=I$ and $P_2=I$ 
and apply Lemma~\ref{property3} to $\hat \Sigma^{lin}$. Then 
 Assumption~\ref{ass031} is satisfied with
$\bar \lambda(1)=-0.7727$, $\bar \lambda(2)=8.8284$, 
$\bar r(1,1)=\bar r(2,2)=1$, $\bar r(1,2)=4.1712$ and $\bar r(2,1)=0.0156$. Given that $\bar r(1,1)=\bar r(2,2)=1$, we have $\lambda_1(1)= \bar \lambda(1)$ and
$\lambda_2(1)= \bar \lambda(2)$. According to the signs of $\lambda_1(1)$ and $\lambda_2(1)$ and 
the definitions 
in \eqref{eq:sets_times}, we have
$\I^u=\I_1=\{2\}$ and $\I^s=\I_2=\{1\}$, 
and Assumption~\ref{ass_times} is satisfied.
Consider $L=\{1,2\}$, from \eqref{Rell} we obtain $R(1)= 4.1712$ and $R(2)=0.0650$. Select $L=2$ and derive $\hat R(2)=4.1712$ from \eqref{hatR} and $T_S=0.1$,  $N_S=c_s \munderbar N_0 -2$ from \eqref{N0TSnus}.
Select $c_s=0.4$. By  Theorem~\ref{cor:SGUES_simp}, using
\eqref{H2-simp} and \eqref{H1-simp} 
we conclude that
the family $\hat \Sigma_\Lambda^{lin}$ is S-GUES  with $K=22.4229$ and $\lambda=0.41$.

In order to apply \cite[Theorem III.1]{mancilla2021integral}, define
 $\tilde N:=0.012 <\frac{\lambda}{Ke^{\lambda}}\approx 0.0121$ and %the functions
\begin{equation*}
    \theta^1(t):=\max\{0.001 (1+\sin(t))-\tilde N,0\}=0, \quad  \theta^2(t):=\max\{0.005 (1+\cos(t))-\tilde N,0\}=0.
\end{equation*}
Given that $\sup_{(\nu,\mu)\in   \Lambda} \lp \int_0^\infty \theta^\nu(s) ds + \sum_{t \in \sigma} \theta^\nu(t)   \rp=0<\infty$, the conditions of the theorem are satisfied and consequently the family
of nonlinear systems of the form \eqref{impulsive_example} %
is S-iISS with respect to input $u$.

\section{Conclusions}
\label{sec:conclusions}
We have established sufficient conditions to ensure strong %and weak 
global uniform exponential stability of switched impulsive systems, \hh{where `strong' refers to the fact that the convergence bound decays not only as continuous time elapses but also as the number of jumps increses. These conditions allow} for explicit computation of the initial bound $K$ and the decay rate $\lambda$. These conditions are based on multiple Lyapunov-type functions.
The approach accounts for the existence of unstable dynamics in both the continuous and discrete parts of each switching mode or subsystem.
The given sufficient conditions are based on direct and reverse average dwell-time bounds on the switching signals and impulsive jumps occurring within each mode, and on bounds on the activation times of specific subsystem groups.
The approach allows to adequately consider constraints on admissible jumps between modes and shows that the decay rate estimate can be improved by analyzing longer switching sequences between subsystems. 
For linear switched impulsive systems, we also provide a method for computing the Lyapunov-type functions that satisfy the required assumptions.
The present results extend and complement previous results on specific classes of switched time-varying impulsive nonlinear systems. 
Future work can be aimed at the search for appropriate functions to bound each subsystem's dynamics, the optimization of the coefficients that balance the impact of the flow and jumps of each subsystem to achieve better bounds, and the application of the present results to controller design.

\section*{Appendix}
\begin{appendices}

\section{Bounding functions for linear switched impulsive systems}
\label{app:boundingLSI}
We will provide a constructive method
to derive mode-dependent quadratic functions
that satisfy Assumption \ref{ass031}.
We consider a linear version of \eqref{impulsive_new}, 
i.e. Linear Switched Impulsive (LSI) systems $\Sigma^{lin}$ of the form
\begin{subequations}
\label{impulsive_linear2}
\begin{eqnarray}
&\dot x(t)=A_{\nu(t)} x(t), \phantom{1111} \quad &t \notin  \sigma, \\
&x(t)=J_{{\nu(t^-)},{\nu(t)}} x(t^-),  \quad &t \in  \sigma, \label{am2}
\end{eqnarray}
\end{subequations}
with $A_i, J_{i,j} \in \R^{n\times n}$ for all $i\in \I$ and $(i,j) \in \J \cup \bJ$.
The evolution of $\Sigma^{lin}$ within switching mode $i$ is given by the flow matrix $A_i$ and the single-mode impulse matrix $J_{i,i}$. Taking into account that $A_i$ may be Hurwitz ($H$) or not Hurwitz ($\bar H$), and that $J_{i,i}$ may be Schur ($S$), not Schur ($\bar S$) or identity ($I$), 
we distinguish the five combinations %
$(H,I)$, $(H,S)$, $(H, \bar S)$, $(\bar H,S)$ and $(\bar H, \bar S)$.
 
To compute the functions $V_i$ of Assumption~\ref{ass031},
we partition the index set $\I$ as $\I=\I_c \cup \I_d \cup\I_u$, 
where subsystems of the form  $(H,I), (H , \bar S)$ are in $\I_c$,
subsystems of the form $(\bar H,S)$ are in $\I_d$ and those of the form  $(\bar H, \bar S)$ are in $\I_u$. Subsystems of the form
$(H,S)$ can be chosen to belong to either $\I_c$ or $\I_d$
depending on which choice %
maximizes the decay rate. 
The following lemma shows how to compute quadratic functions $V_i$ %
that satisfy Assumption~\ref{ass031}, for an LSI system of the form \eqref{impulsive_linear2}. 
\jlma{For $P,Q \in \R^{n\times n}$ such that $Q=Q^T$ and $P=P^T>0$ let 
 $\lambda_m(Q,P):= \min \{ \lambda \in \R :\det(Q-\lambda P)=0 \}$,
  $\lambda_M(Q,P):= \max \{ \lambda \in \R :\det(Q-\lambda P)=0 \}$ and let $\lambda_m(P)$ and $\lambda_M(P)$ be the minimum and maximum eigenvalue of $P$, respectively.}
\begin{lemma}
\label{property3}
Select   $Q_i\in \R^{n\times n}$ with $Q_i=Q_i^T>0$,
for all $i\in \I_c \cup \I_d $ and
generate  $P_i=P_i^T>0$ as the unique solution of 
\begin{subequations}
\label{lyap4}
\begin{align}
    &A_i^TP_i+P_iA_i=-Q_i, \quad \forall i \in \I_c, \label{lyap4_c} \\ 
    &J_{i,i}^TP_i J_{i,i}-P_i=-Q_i, \quad \forall i \in \I_d. \label{lyap4_d}
\end{align}
\end{subequations}
Select $P_i=P_i^T>0$ for all $i\in \I_u$
and define 
\begin{equation}
\tilde Q_{i}:=- \lp A_i^T P_i+P_i A_i \rp, \quad \forall i \in \I_d \cup\I_u. \label{tildeQ}
\end{equation}
Define 
the functions $V_i(\xi):=\xi^T P_i \xi$
for all $i\in \I$ and $\xi \in \R^n$.
Then Assumption~\ref{ass031} holds for the LSI system \eqref{impulsive_linear2} with \jlma{$m=2$} and
\begin{align}
\textcolor{black}{\munderbar{K}_i}&:= \lambda_m(P_i), \quad \textcolor{black}{\bar{K}_i}:= \lambda_M(P_i), \quad \forall i\in \I, \label{prop1_x}\\
 \bar \lambda(i)&:=
\begin{cases}
& %
-\lambda_{m}(Q_i,P_i)
\quad \forall i \in \I_c, \\ 
& 
-\lambda_{m}(\tilde Q_i,P_i)
\quad \forall i \in \I_d \cup \I_u, \label{prop2_2} \\
\end{cases}\\
\bar r(i,j)&:=
\begin{cases}
&
\lp 1- \lambda_m(Q_i,P_i) \rp 
\phantom{111} \forall (i,j) \in \bJ : i \in \I_d,  \\
&\lambda_M(J_{i,j}^T P_j J_{i,j},P_i)  \phantom{111}
\text{otherwise },
\end{cases}  %\notag  \\
\qquad\text{$\forall(i,j)\in \J\cup \bJ$.}  \label{prop2_3}
\end{align}
\end{lemma}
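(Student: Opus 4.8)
The plan is to verify the three inequalities of Assumption~\ref{ass031} separately, treating the quadratic functions $V_i(\xi)=\xi^T P_i \xi$ partition-by-partition according to whether $i\in\I_c$, $i\in\I_d$, or $i\in\I_u$. For the norm-equivalence bound \eqref{ass03-1} with $m=2$, this is immediate from the standard Rayleigh-quotient inequalities $\lambda_m(P_i)|\xi|^2 \le \xi^T P_i \xi \le \lambda_M(P_i)|\xi|^2$ for symmetric positive-definite $P_i$, which directly yields $\munderbar K_i=\lambda_m(P_i)$ and $\bar K_i=\lambda_M(P_i)$ for every $i\in\I$, independent of the partition.

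For the flow bound \eqref{ass03-2}, I would compute $\nabla V_i(\xi) f_i(\xi) = \xi^T(A_i^T P_i + P_i A_i)\xi$. In each case this symmetric matrix equals $-Q_i$ (for $i\in\I_c$, by the Lyapunov equation \eqref{lyap4_c}) or $-\tilde Q_i$ (for $i\in\I_d\cup\I_u$, by the definition \eqref{tildeQ}). The key step is then the generalized-eigenvalue bound: for symmetric $M$ and positive-definite $P$, one has $\xi^T M \xi \le \lambda_M(M,P)\,\xi^T P \xi$ and $\xi^T M \xi \ge \lambda_m(M,P)\,\xi^T P \xi$, which follows from simultaneous diagonalization of the pencil $(M,P)$. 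Applying this to $M=-Q_i$ (respectively $-\tilde Q_i$) gives $\xi^T(A_i^T P_i+P_i A_i)\xi \le -\lambda_m(Q_i,P_i)\,V_i(\xi)$, matching the stated $\bar\lambda(i)$ in \eqref{prop2_2}.

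For the jump bound \eqref{ass03-3}, I would compute $V_j(g_{i,j}(\xi)) = (J_{i,j}\xi)^T P_j (J_{i,j}\xi) = \xi^T (J_{i,j}^T P_j J_{i,j})\xi$ and again apply the generalized-eigenvalue inequality, now with $M = J_{i,j}^T P_j J_{i,j}$ (a symmetric positive-semidefinite matrix) against $P_i$, yielding the bound $\lambda_M(J_{i,j}^T P_j J_{i,j}, P_i)\,V_i(\xi)$, which is the ``otherwise'' branch of \eqref{prop2_3}. The only case requiring separate treatment is the diagonal single-mode jump for $i\in\I_d$: there $J_{i,i}$ was used to \emph{define} $P_i$ via the discrete Lyapunov equation \eqref{lyap4_d}, giving $J_{i,i}^T P_i J_{i,i} = P_i - Q_i$, so that $V_i(g_{i,i}(\xi)) = \xi^T(P_i - Q_i)\xi = V_i(\xi) - \xi^T Q_i \xi \le \bigl(1 - \lambda_m(Q_i,P_i)\bigr)V_i(\xi)$, which gives the sharper first branch of \eqref{prop2_3}.

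The main subtlety to check, rather than a computational obstacle, is well-posedness of the construction: I must confirm that each $P_i$ exists, is unique, and is positive-definite. For $i\in\I_c$ this requires $A_i$ Hurwitz so that the continuous Lyapunov equation \eqref{lyap4_c} has a unique positive-definite solution for any $Q_i>0$; for $i\in\I_d$ it requires $J_{i,i}$ Schur so that the discrete Lyapunov equation \eqref{lyap4_d} is uniquely solvable with $P_i>0$. These are exactly the Hurwitz/Schur classifications defining the partition $\I_c,\I_d,\I_u$, so the construction is consistent, and for $i\in\I_u$ the $P_i$ is chosen freely positive-definite with no solvability requirement. Once these facts are in place, the three bounds assemble into Assumption~\ref{ass031} with the stated constants.
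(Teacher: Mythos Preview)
Your proposal is correct and follows essentially the same approach as the paper: both verify \eqref{ass03-2} and \eqref{ass03-3} by computing the quadratic forms and invoking the generalized Rayleigh-quotient identities $\lambda_m(M,P)=\min_{x\neq 0}\frac{x^TMx}{x^TPx}$ and $\lambda_M(M,P)=\max_{x\neq 0}\frac{x^TMx}{x^TPx}$, with the diagonal jump for $i\in\I_d$ handled separately via the discrete Lyapunov equation \eqref{lyap4_d}. Your treatment is in fact slightly more complete, since you also spell out the Rayleigh bound for \eqref{ass03-1} and the Hurwitz/Schur well-posedness of \eqref{lyap4}, which the paper leaves implicit.
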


\begin{proof}
For all %
$x \in \R^n \backslash\{0\}$ %
and $t\notin \sigma$
we have
\begin{equation}
\dot V_i(x)=\nabla V_i(x) A_i x=x^T (A_i^TP_i+P_i A_i) x=-x^T Q_i x =  -\frac{x^T Q_i x}{x^T P_i x} V_i(x) \leq - \min_{x\neq 0} \left \{\frac{x^T Q_i x}{x^T P_i x} \right \} V_i(x)  = -\lambda_{m}(Q_i,P_i) V_i(x),
\label{eq:p_l3.14_1} 
\end{equation}
for all $i \in \I_c$ and
\begin{equation} 
\dot V_i(x) = - x^T \tilde Q_{i} x = -\frac{x^T  \tilde Q_{i} x}{x^T  P_{i} x} V_i(x)   \leq 
- \min_{x\neq 0} \left \{\frac{x^T \tilde Q_i x}{x^T P_i x} \right \} V_i(x)  =
-\lambda_m( \tilde Q_i,P_i) V_i(x), \label{eq:p_l3.14_2}
\end{equation}
for all $i \in \I_d \cup \I_u$, where in \eqref{eq:p_l3.14_1} and \eqref{eq:p_l3.14_2}  we use the property that for $Q_i=Q_i^T \in \R^{n\times n}$ and $P_i=P_i^T>0$,
$P\in \R^{n\times n}$ the minimum generalized eigenvalue is given by $\lambda_m(Q_i,P_i)=\min_{x\neq 0} \left \{\frac{x^T  Q_i x}{x^T P_i x} \right \}$, see e.g. \citep[Section 2.4.3]{kong2017principal}.
Thus, \eqref{ass03-2} holds with $\bar \lambda$ as defined in \eqref{prop2_2}.
For all  $i,j \in \I_d$ such that $i=j$ by \eqref{prop2_3} we have 
$V_i=V_j$ and 
\begin{equation}
V_i(J_{i,j}x)-V_i(x) = - x^T Q_i x = -\frac{x^T Q_i x }{x^T P_i x} V_i(x)
\leq- \lambda_m(Q_i,P_i) V_i(x),
\label{gbe}
\end{equation}
for all $x\in \R^n \backslash \{0\}$.
Combining \eqref{gbe} with \eqref{prop1_x} we obtain
\begin{equation}
 V_j(J_{i,j}x) \leq \lp 1- \lambda_m(Q_i,P_i) \rp V_i(x) = \bar r(i,j) V_i(x).
\end{equation}
Additionally, for some event time $\tau \in \sigma$  
and $i$ and $j$ the active subsystems of $\nu$ before and after $\tau$, respectively,
we have for all $x \in \R^n \backslash \{0\}$ that
\begin{align}
V_j(J_{i,j} x )= \frac{V_j(J_{i,j} x )}{V_i(x)}V_i(x) = \frac{x^TJ_{i,j}^TP_jJ_{i,j}x}{x^TP_i x} V_i(x)\leq \max_{x \neq 0} \left\{ \frac{x^TJ_{i,j}^TP_jJ_{i,j}x}{x^TP_i x}  \right\} V_i(x)= \lambda_M(J_{i,j}^TP_jJ_{i,j},P_i) V_i(x).
\end{align}
Thus, \eqref{ass03-3} holds with $\bar r$ as defined in \eqref{prop2_3}.
\end{proof}
In Lemma~\ref{property3}, we choose $Q_i$ for all $i\in \I_c \cup \I_d$ 
to ensure that $\bar\lambda(i)<0$ for each Hurwitz subsystem $i\in \I_c$ and $r(i,i)<1$ for each Schur subsystem $i\in \I_d$. 
 This selection allows to compute matrices $P_i$ for all $i\in \I_c \cup \I_d$ according to \eqref{lyap4}. Furthermore, we select
$P_i=P_i^T>0$ for all unstable subsystems $i\in \I_u$, ensuring Assumption~\ref{ass031} is satisfied with 
 functions $V_i(\xi)=\xi^T P_i \xi$ for all $i\in \I_c \cup \I_d \cup \I_u$. It is important to note that 
 we may have cases where $\bar \lambda(i)>0$ for some $i \in \I_d \cup \I_u$ and
 $ \bar r(i,j)>1$ for some $(i,j)\in \J$ or $(i,i) \in \bJ$ such that $i\notin \I_d$.

\section{Proof of Lemma~\ref{cor:1}}
\label{appendix:0}
We begin by proving the following claim.
\begin{claim}
\label{claim1}
For all $x_0\in \R^n$ and $t\geq t_0\geq 0$ the solutions with $x(t_0) = x_0$ satisfy 
\begin{equation}
   |x(t)| \leq    \underline{\alpha}_{\nu(t)}^{-1} \lp \exp \lp{ \int_{t_0^+}^t\lambda(\tau)+ r(\tau) d\tau}  \rp \bar \alpha_{\nu(t_0)}(|x_0|)  \rp. \label{eqTH1-x}
\end{equation}
\end{claim}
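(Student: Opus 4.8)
The plan is to track the evolution of the \emph{switched} Lyapunov value $W(t) := V_{\nu(t)}(x(t))$ along a solution, show that it obeys the multiplicative bound
\[
W(t) \le W(t_0)\,\exp\!\left(\int_{t_0^+}^t \lambda(\tau)+r(\tau)\,d\tau\right),
\]
and then convert this into a bound on $|x(t)|$ using the sandwich inequalities of Assumption~\ref{ass031}. First I would fix $x_0$, $t_0$ and $t \ge t_0$, and enumerate the finitely many event times $\tau_k \in (t_0,t]$ (finiteness follows since $\tau_k \to \infty$); these partition $(t_0,t]$ into half-open flow intervals on each of which $\nu$ is constant.

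On each such flow interval, where mode $i$ is active, the map $s \mapsto V_i(x(s))$ is locally absolutely continuous and, by \eqref{ass03-2}, satisfies $\tfrac{d}{ds}V_i(x(s)) = \nabla V_i(x(s)) f_i(x(s)) \le \bar\lambda(i) V_i(x(s))$ for almost every $s$. Since $V_i \ge 0$, the comparison lemma gives $V_i(x(s_2)) \le V_i(x(s_1))\exp(\bar\lambda(i)(s_2 - s_1))$ across the interval, even when $\bar\lambda(i)>0$. At each event time the jump bound \eqref{ass03-3} yields
\[
V_{\nu(\tau_k)}(x(\tau_k)) = V_{\nu(\tau_k)}\!\big(g_{\nu(\tau_k^-),\nu(\tau_k)}(x(\tau_k^-))\big) \le \bar r(\nu(\tau_k^-),\nu(\tau_k))\,V_{\nu(\tau_k^-)}(x(\tau_k^-)).
\]
I would then chain these two estimates by induction over the events in $(t_0,t]$: each flow interval of mode $i$ contributes a factor $\exp(\bar\lambda(i)\cdot(\text{its length}))$ and each event $\tau_k$ a factor $\bar r(\nu(\tau_k^-),\nu(\tau_k))$. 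Recognising that $\bar\lambda(\nu(\tau)) = \lambda(\tau)$ by \eqref{func_lambda3} collapses the accumulated continuous factors into $\exp(\int_{t_0^+}^t \lambda(\tau)\,d\tau)$, while the product $\prod_{\tau_k \in (t_0,t]} \bar r(\nu(\tau_k^-),\nu(\tau_k))$ equals $\exp(\int_{t_0^+}^t r(\tau)\,d\tau)$ by the distributional identity recorded just after \eqref{func_mu3}. This establishes the displayed bound on $W(t)$, with $W(t_0) = V_{\nu(t_0)}(x_0)$; the lower limit $t_0^+$ and the absence of a jump factor at $t_0$ are consistent with a solution always beginning with a flow, even when $t_0 \in \sigma$.

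Finally I would invoke Assumption~\ref{ass031}: the upper bound in \eqref{ass03-1} gives $W(t_0) \le \bar\alpha_{\nu(t_0)}(|x_0|)$, and the lower bound gives $\underline{\alpha}_{\nu(t)}(|x(t)|) \le V_{\nu(t)}(x(t)) = W(t)$. Since $\underline{\alpha}_{\nu(t)}$ is strictly increasing, applying its increasing inverse to both sides and substituting the $W(t)$ bound yields exactly \eqref{eqTH1-x}. I expect the main obstacle to be the careful bookkeeping in the chaining step rather than any single estimate: one must respect the right-continuity conventions (so that $x(\tau_k)$ is the post-jump value and $x(\tau_k^-)$ the pre-jump value), correctly align the flow intervals abutting each event, and ensure the comparison lemma is applied on the half-open pieces where the differential inequality holds only almost everywhere, with the edge case $t_0 \in \sigma$ treated explicitly so that the integral genuinely starts at $t_0^+$.
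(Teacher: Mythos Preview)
Your proposal is correct and follows essentially the same approach as the paper: both define the switched Lyapunov value $w(t)=V_{\nu(t)}(x(t))$, derive the differential inequality on flow intervals and the multiplicative jump inequality at event times from Assumption~\ref{ass031}, obtain the bound $w(t)\le \exp\!\big(\int_{t_0^+}^t \lambda(\tau)+r(\tau)\,d\tau\big)w(t_0)$, and then apply the sandwich bounds \eqref{ass03-1}. The only cosmetic difference is that the paper invokes a standard impulsive comparison result (\cite[Theorem~1.4.1]{lakshmikantham1989theory}) for the chaining step, whereas you spell out the induction over events explicitly.
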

\begin{proof}[Sketch of proof of Claim~\ref{claim1}]
\let\qed\mer
For the initial time $t_0\ge 0$ and the initial condition $x_0 \in \R^n$ define
the piecewise continuous right\-/continuous function $w : [t_0,\infty) \rightarrow \R_{\geq 0}$ via
$w(t):=V_{\nu(t)}(x(t))$.
From Assumption~\ref{ass031} and the definition of function $\lambda$ in \eqref{func_lambda3} we have
\begin{subequations}
\label{comparison}
\begin{eqnarray}
\dot w(t) \leq  \lambda(t) w(t) \phantom{|||||||||||||||||||} &t \notin  \sigma, \label{eq1} \\
 w(t) \leq  \bar r(\nu(t^-),\nu(t)) w(t^-) &t \in  \sigma. \label{eq2}
\end{eqnarray}
\label{eqcom}
\end{subequations}
For the sake of notation define  $\bar r_k:=\bar r(\nu({\tau_k}^-),\nu({\tau_k}))$.
Define  $\ell := \min \{i \in \N: \tau_i \in \sigma \land \tau_i >t_0 \}$ and, for $t>t_0$, $\bar \ell=\bar \ell(t)=\max \{i \in \N: \tau_i \in \sigma \land \tau_i \le t \}$
so that $\tau_\ell$ represents the first event instant after $t_0$ and $\tau_{\bar \ell}$ the last even instant before or equal to $t$.
Considering the system given by the equality in \eqref{comparison} and
following a similar procedure as in \cite[Theorem 1.4.1]{lakshmikantham1989theory} we get
\begin{equation}
w(t)\leq \lp \prod_{k=\ell}^{\bar \ell} \bar r_k \rp  e^{\int_{t_0^+}^{t} \lambda(s) ds}  w(t_0) = e^{ \sum_{k=\ell}^{\bar \ell} \ln{(\bar r_k)}}  e^{\int_{t_0^+}^{t} \lambda(s) ds}  w(t_0)%\es 
 %&
 =e^{\int_{t_0^+}^{t} r(s)+\lambda(s) ds} w(t_0).
\label{eq:bound2}
\end{equation}
Thus, 
from \eqref{eq:bound2} and considering \eqref{ass03-1} we have, for all $t\geq t_0$
\begin{equation}
   |x(t)| \leq \underline{\alpha}_{\nu(t)}^{-1} \lp w(t) \rp \leq    \underline{\alpha}_{\nu(t)}^{-1} \lp e^{ \int_{t_0^+}^t\lambda(s)+ r(s) ds} w(t_0) \rp 
   \leq    \underline{\alpha}_{\nu(t)}^{-1} \lp \exp \lp{ \int_{t_0^+}^t\lambda(s)+ r(s) ds} \rp \bar \alpha_{\nu(t_0)}(|x(t_0)|) \rp.
\end{equation}  
\end{proof}

Define $\munderbar{\alpha}^{-1}:= \max_{i\in \I}\{ \munderbar{\alpha}_i^{-1}\}$, 
$\bar\alpha :=  \max_{i\in \I} \bar \alpha_i $ and $\beta \in \KL$ as $\beta(r,s):=\munderbar{\alpha}^{-1}\left(e^C e^{-\lambda_0(s)}\bar \alpha(r)  \right)$. Let $(\nu,\mu)\in \Lambda$.
Thus, from \eqref{condi2c} and \eqref{eqTH1-x}
we have
\begin{align}
|x(t)|&\leq \munderbar{\alpha}_{\nu(t)}^{-1} \lp \exp(C) \exp(- \lambda_0(t-t_0+n(t,t_0)) ) \bar {\alpha}_{\nu(t_0)}(|x_0|)  \rp 
\notag \\
%&
&\leq \munderbar{\alpha}^{-1} \lp \exp(C) \exp(- \lambda_0(t-t_0+n(t,t_0)) ) \bar {\alpha}(|x_0|) \rp = \beta(|x_0|,t-t_0+n(t,t_0)).  \label{ecua}
\end{align}
Now, we will prove that with the functions $\munderbar{\alpha}_i^{-1}, \bar \alpha_i \in \K_\infty$ defined as in Assumption~\ref{ass031}, $\Sigma_{\Lambda}$ is S-GUES.
From the definitions of $\munderbar{\alpha}^{-1}$ and $\bar \alpha$ we have
 $\munderbar \alpha^{-1}(s)=  \lp \frac {s}{\min_{i\in \I} \munderbar K_i}\rp^{1/m}$
and  $\bar \alpha(s)= \lp \max_{i\in \I} \bar K_i \rp s^m$. 
Replacing those functions %
in \eqref{ecua} we obtain
\begin{equation}
|x(t)|\leq  \exp\lp\frac{C}{m}\rp \lp \frac{\max_{i\in \I} \bar K_i}{\min_{i\in \I} \munderbar K_i}\rp^{1/m}  \exp\lp- \frac{\lambda_0}{m}(t-t_0+n(t,t_0)) \rp|x_0| \label{ecua2} %\\ 
= K \exp \lp - \lambda (t-t_0+n(t,t_0)) \rp  |x_0|, %\notag
\end{equation}
where 
\begin{equation}
\label{eq:K_lamb}
K:=  \exp\lp\frac{C}{m}\rp \lp\frac{\max_{i\in \I} \bar K_i}{\min_{i\in \I} \munderbar K_i}\rp^{1/m} \quad \text{ and }  \quad \lambda:=\frac{\lambda_0}{m}.
\end{equation}
Thus, $\Sigma_{\Lambda}$ is S-GUES. \qed

\section{Proof of Lemma~\ref{lemma:r_i_SGUAS}}
\label{appendix:2}
i) 
Consider $ \lambda_i(1)<0$.
We proceed by dividing the proof into three cases according to the values of $\ln(\bar r(i,i))$ and $\bar \lambda(i)$.
\begin{enumerate}[1)]
\item  $\ln(\bar r(i,i))<0$ and $\bar \lambda(i)\leq 0$: 
Select $c_i \in(0,1)$, thus 
$r_i(c_i)=(1-c_i)  \ln \lp \bar r(i,i) \rp <0$ and $\lambda_i(c_i)= \bar \lambda(i) +  c_i \frac{\ln \lp \bar r(i,i) \rp}{ T_J^i}<0$. % 
\item  $\ln(\bar r(i,i))>0$ and $\bar \lambda(i)<0$: 
With $c_i>1$ we ensure $r_i(c_i)<0$.
Consider $c_i:=1+\epsilon_1$ with $\epsilon_1>0$. We require that %$\lambda_i(c_i)<0$, thus %Define $\hat \lambda(i):=\bar \lambda(i) + \frac{\ln \lp \bar r(i,i) \rp}{ T_J^i}<0$ 
\begin{equation}
\lambda_i(c_i)=\bar \lambda(i) +  c_i \frac{\ln \lp \bar r(i,i) \rp}{ T_J^i} %\notag \\
=\lambda_i(1)+ \epsilon_1\frac{\ln \lp \bar r(i,i) \rp}{ T_J^i}<0. 
\label{eq:87}
\end{equation}
From \eqref{eq:87}, we have $0<\epsilon_1 < - \frac{\lambda_i(1) T_J^i}{\ln(\bar r(i,i))}$, this implies that $c_i= 1+\epsilon_1<1 - \frac{\lambda_i(1) T_J^i}{\ln(\bar r(i,i))}=- \frac{\bar \lambda(i) T_J^i}{\ln(\bar r(i,i))}$.
Thus, for all $c_i\in\left (1,- \frac{\bar \lambda(i) T_J^i}{\ln(\bar r(i,i))}\right)$ we have $r_i(c_i)<0$ and $\lambda_i(c_i)<0$.

\item  $\ln(\bar r(i,i))<0$ and $\bar \lambda(i)>0$: With 
$c_i<1$ we ensure
$r_i(c_i)<0$.
Consider $c_i \in (0,1)$. 
From \eqref{eq:87} we have $c_i> 1- \lambda_i(1) \frac{T_J^i}{\ln(\bar r(i,i))}=- \frac{\bar \lambda(i) T_J^i}{\ln(\bar r(i,i))}$. Then for all $c_i\in\left (- \frac{\bar \lambda(i) T_J^i}{\ln(\bar r(i,i))},1\right)$ 
 we have $r_i(c_i)<0$ and $\lambda_i(c_i)<0$.
\end{enumerate}

ii) Consider $\lambda_i(1) \geq 0$. %
We proceed analogously by dividing the problem into three cases.
\begin{enumerate}[1)]
\item  $\ln(\bar r(i,i))<0$ and $\bar \lambda(i)>0$: 
By selecting $c_i \in(0,1)$  we have $r_i(c_i)<0$.
Given that $\hat \lambda(i)>0$ we have $\bar  \lambda(i)\geq -\frac{\ln(\bar r(i,i))}{T^i_J}>-c_i\frac{\ln(\bar r(i,i))}{T^i_J}>0$ for all $c_i\in(0,1)$ and thus $\lambda_i(c_i)=\bar \lambda(i)+c_i\frac{\ln(\bar r(i,i))}{T^i_J}>0$.

\item   $\ln(\bar r(i,i))>0$ and $\bar \lambda(i) \geq 0$: 
 With $c_i \in (1, \infty)$ we ensure $r_i(c_i)<0$. %
Thus, $\lambda_i(c_i)=\bar \lambda(i) +  c_i \frac{\ln \lp \bar r(i,i) \rp}{ T_J^i}>0$.

\item   $\ln(\bar r(i,i))>0$ and $\bar \lambda(i)<0$: 
 Selecting  $c_i \in (1, \infty)$ we have $r_i(c_i)<0$.
Thus, $\lambda_i(c_i)> \lambda_i(1)>0$.  \qed
\end{enumerate}

\section*{Acknowledgments}
Work partially supported by Grant PICT-2021-I-A-00730, Agencia I+D+i, Argentina.

\end{appendices}

\bibliographystyle{elsarticle-harv}
\bibliography{main}
\end{document}